\DeclarePairedDelimiter\abs{\lvert}{\rvert}
\DeclarePairedDelimiter\norm{\lVert}{\rVert}
\newcommand{\br}[1]{\left(#1\right)}
\newcommand{\bre}[1]{\left[#1\right]}
\newcommand{\aS}{\abs{S}}
\newcommand{\ltg}{\log^{2\gamma}n}
\newcommand{\EE}[1]{\ensuremath{ \mathbb{E} \left[ #1 \right] } }
\newcommand{\Expected}[1]{\ensuremath{ \mathbb{E} \left[ #1 \right] } }
\renewcommand{\Pr}[1]{\ensuremath{\mathbb{P} \left(#1 \right) }}
\newcommand{\eps}{\varepsilon}
\newcommand{\Vol}{\ensuremath{\text{Vol}}}
\newcommand{\N}{\mathbb{N}}
\newcommand{\Z}{\mathbb{Z}}
\newcommand{\R}{\mathbb{R}}
\newcommand{\T}{\mathbb{T}}
\newcommand{\CC}{{\mathcal C}}
\newcommand{\CE}{{\mathcal E}}
\newcommand{\CG}{{\mathcal G}}
\newcommand{\CH}{{\mathcal H}}
\newcommand{\CV}{{\mathcal V}}
\newcommand{\pr}{\mathbb{P}}
\begin{document}
%

\title{Expanders in Models of Social Networks}
%
%
\author{Marc Kaufmann \and Johannes Lengler \and Ulysse Schaller \and Konstantin Sturm}
\authorrunning{M. Kaufmann, J. Lengler, U. Schaller, K. Sturm}

\institute{Department of Computer Science, ETH Zürich, Zürich, Switzerland\\
\email{\{kamarc,lenglerj,ulysses,kosturm\}@ethz.ch}}
\maketitle              
%
\begin{abstract}
    A common model for social networks are Geometric Inhomogeneous Random Graphs (GIRGs), in which vertices draw a random position in some latent geometric space, and the probability of two vertices forming an edge depends on their geometric distance. The geometry may be modelled in two ways: either two points are defined as close if they are similar in \emph{all} dimensions, or they are defined as close if they are similar in \emph{some} dimensions.

    The first option is mathematically more natural since it can be described by metrics. However, the second option is arguably the better model for social networks if the different dimensions represent features like profession, kinship, or interests. In such cases, nodes already form bonds if they align in some, but not all dimensions.

    For the first option, it is known that the resulting networks are poor expanders. We study the second option in the form of Minimum-Component Distance GIRGs, and find that those behave the opposite way for dimension $d\ge 2$, and that they have strong expanding properties. More precisely, for a suitable constant $C>0$, the subgraph induced by vertices of (expected) degree at least $(\log n)^C$ forms an expander. Moreover, we study how the expansion factor of the resulting subgraph depends on the choice of $C$, and show that this expansion factor is $\omega(1)$ except for sets that already take up a constant fraction of the vertices. 
    This has far-reaching consequences, since many algorithms and mixing processes are fast on expander graphs.
    
    \keywords{Geometric Inhomogeneous Random Graphs \and Social Network Models \and Expander Graphs \and Minimum-Component Distance.}
\end{abstract}



\section{Introduction}\label{sec:intro}

Personal attributes affect whom we know and whom we befriend. Most of our acquaintances are similar to us in some dimensions, for example, our colleagues share the same employer with us, our neighbours live in (almost) the same place, many of our friends share our hobbies, and so on. A common approach for generative models of social networks is thus to define a latent geometric space which captures similarity along such dimensions, and places edges (possibly in a randomized manner) between nodes that are geometrically close to each other.

A well-studied model with such a latent space are Geometric Inhomogeneous Random Graphs (GIRGs), which capture many structural properties observed in real networks. In particular, they are sparse (have a linear number of edges), small worlds, exhibit heavy-tailed degree distributions and have a large clustering coefficient~\cite{bringmann2024average}. Moreover, they are navigable~\cite{bringmann2022greedy}, match betweenness centrality of real networks~\cite{dayan2024expressivity}, reproduce the runtimes of various algorithms on real social networks~\cite{bläsius2024external}, and have been used to explain differences in epidemic curves in social networks~\cite{komjathy2024polynomial, komjáthy2024universalgrowthregimesdegreedependent} and the effectiveness of different epidemiological interventions~\cite{jorritsma2020interventions}. 

GIRGs are generated as follows. Each node $v$ independently draws a random position $x_v$ in a $d$-dimensional geometric ground space, where the different dimensions may be used to represent attributes like profession, place of residence, hobbies, kinship, and so on. Moreover, each node $v$ also independently draws a random weight $w_v$ from a power-law distribution, which will correspond to its expected degree, up to a constant factor. Every pair of nodes then forms an edge with a probability that increases with their weights and decreases with their geometric distance (the exact connection probability can be found in Definition~\ref{def:mcd_girgs}).

However, this construction leaves an important degree of freedom: how to measure the geometric distance in the ground space? Of course, an obvious possibility is to use a standard mathematical distance as induced by the Euclidean distance or the $L^\infty$ distance. However, those natural choices come with an unwanted side effect: in the Euclidean or $L^\infty$ distance, two points only have small distance if they are close \emph{along all dimensions}. If they differ strongly in a single coordinate, this suffices to make their geometric distance large. In terms of social networks, this is not desirable, since it would mean that two nodes who are similar in, say, hobby, residency and kinship, but not in their profession, would have a large geometric distance and thus a low probability to form an edge. This clearly does not match the distribution of bonds in social networks. 

For this reason, it was proposed in~\cite{bringmann2024average} to use the \emph{Minimum-Component Distance} (MCD) instead, where the geometric distance is given by the minimum distance along all dimensions.\footnote{MCD-GIRGs and Euclidean GIRGs have been jointly generalized by BDF-GIRGs, which inherit their realistic properties but offer a much richer spectrum of edge formation mechanisms by allowing distance functions with arbitrary combinations of minima and maxima~\cite{kaufmann2024sublinear}. For more information on BDF-GIRGs see Section~\ref{sec:related_work}.} Figure~\ref{fig:sample_mcd_girg} gives a realization of an MCD-GIRG in two dimensions. Note that the MCD is not a metric and does not obey the triangle inequality. For example, consider three points $x,y,z$ where $x$ and $y$ are close (have small MCD) because they agree along the first dimension, and $y$ and $z$ are close because they agree along the second dimension. Then this does not imply that $x$ and $z$ are also close. This matches the fact that nodes in social networks may belong to very different social circles. If $x$ and $y$ are co-workers, and $y$ and $z$ are close relatives, then it makes sense to consider $x$ and $y$ as ``close'' (for example, they probably have many shared neighbours in the network due to their shared work environment), and to also consider $y$ and $z$ as close, but this does not imply that $x$ and $z$ are close as well. 

Strikingly, it was shown in~\cite{bringmann2024average} that many of the fundamental properties of GIRGs do not depend on the underlying geometry. In particular, for \emph{any} geometric distance function\footnote{With some very mild technical conditions, in particular that the distance function is symmetric and that the volume of a ball of radius $r$ grows continuously with~$r$.}, the resulting model still has the \emph{ultra-small world} property that average graph distances (hop distances) are only of order $O(\log \log n)$, where $n$ is the number of vertices in the graph. Moreover, the model has a power-law degree distribution, and in both the MCD case and the case of classical Euclidean norms\footnote{Or the $L^\infty$ norm. In fact, the choice of the norm does not matter since all norms in $d$-dimensional space are equivalent.} it has high clustering coefficient $\Theta(1)$. None of these properties depend on the choice of the geometric distance function. However, in this paper we show a fundamental difference between the Euclidean GIRG model and the MCD-GIRG model: unlike the Euclidean version, MCD-GIRGs have strong \emph{expansion properties}.

\subsection{Our contribution}
We show that in MCD-GIRGs, the subgraph induced by vertices of large degree forms an $\omega(1)$-expander. Intuitively, this means that, any vertex subset $S$ of this subgraph, with a mild restriction on the cardinality of $S$, has external neighbourhood whose cardinality is at least a $\omega(1)$-multiple of $|S|$. In the following, $\tau$ denotes the power-law exponent of the degree distribution, i.e., the number of vertices of degree larger than $x$ is $\Theta(nx^{1-\tau})$, where $n$ is the total number of vertices in the graph. We assume $2 < \tau < 3$, which is arguably the most common case for social networks, see Section \ref{sec:notation} for a discussion. For a graph $\CG = (\CV,\CE)$ and a vertex set $S\subseteq \CV$, we write $N_{ext}(S) \coloneqq \{ v \in \CV \setminus S : \exists u \in S \textnormal{ with } uv \in \CE\}$ for the external neighbourhood of $S$. Our main result can be summarized as follows.

\begin{theorem}[Main theorem, simplified]
    Let $\CG=(\CV,\CE)$ be an MCD-GIRG on $n$ vertices with dimension $d \ge 2$, power-law exponent $\tau \in (2,3)$ and $\alpha>1$. Let $\gamma > \frac{1}{3-\tau}$ and consider the subgraph $\CG'=(\CV',\CE')$ induced by vertices of 
    degree at least $\log^{\gamma} n$. There exists a constant $\eps>0$ such that with high probability\footnote{\emph{With high probability} (whp) means that the probability converges to $1$ as $n \to\infty$.} the following holds. All subsets $S\subset \CV'$ of size $|S| \le \tfrac{1}{2}|\CV'|$ satisfy $|N_{ext}(S) \cap \CV'| \ge  \eps |S|$. Moreover, all subsets $\emptyset \neq S\subset \CV'$ of size $|S| = o(|\CV'|)$ satisfy $|N_{ext}(S) \cap \CV'| = \omega(|S|)$. 
    
    The same statement is true when $\CG'$ is replaced by the graph $\CG''$ induced by vertices of degree in the interval $[\log^\gamma n, 2 \log^\gamma n]$.
\end{theorem}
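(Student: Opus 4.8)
The plan is to prove the expansion statement by a first-moment/union-bound argument over all candidate ``bad'' sets $S$, exploiting the key structural feature of the MCD: in dimension $d\ge 2$, each coordinate hyperplane already carries a one-dimensional GIRG, and a vertex is connected to another whenever they are close in \emph{any} single coordinate. Concretely, I would first reduce to the graph $\CG''$ induced by vertices of (expected) degree in $[\log^\gamma n, 2\log^\gamma n]$ — this is cleaner because all these vertices have comparable weight $w \asymp \log^{\gamma} n$ and hence comparable geometric ``reach'' $r \asymp w^2/n \asymp \log^{2\gamma} n / n$ along each coordinate (using $\alpha>1$ so that the connection probability is $\Theta(1)$ within distance $r$). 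The general statement for $\CG'$ then follows by a dyadic decomposition over weight classes $[\log^\gamma n, 2\log^\gamma n], [2\log^\gamma n, 4 \log^\gamma n], \dots$, since higher weight vertices only have more neighbours, so it suffices to show each class individually expands and then glue.

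Next I would set up the counting. Project the positions onto coordinate~$1$ and partition $[0,1]$ into $\Theta(1/r)$ intervals of length $r$; call these \emph{cells}. Within $\CG''$, vertices in the same cell (or adjacent cells) along coordinate~$1$ are pairwise connected with constant probability, and the expected number of $\CG''$-vertices per cell is $\mu \asymp |\CV''| \cdot r \asymp |\CV''| \log^{2\gamma}n / n = \Theta(n \log^{\gamma(1-\tau)} n) \cdot \log^{2\gamma} n / n = \Theta(\log^{\gamma(3-\tau)} n)$, which is $\omega(1)$ precisely because $\gamma > 1/(3-\tau)$ — this is where the hypothesis on $\gamma$ is used, and it is the quantitative heart of the whole argument. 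So each cell is a near-clique of superconstant size. The key geometric observation is that a set $S$ which does not expand well must, for \emph{every} coordinate $j\in\{1,\dots,d\}$, be ``spread thin'': it cannot occupy more than a few vertices in most cells along coordinate $j$, because otherwise those cells' other occupants land in $N_{ext}(S)$. Combining the constraints from two different coordinates $j=1$ and $j=2$ is what forces $|S|$ to be a constant fraction of $|\CV''|$: if $S$ touches $t$ cells along coordinate 1 and $t'$ cells along coordinate 2 without expanding, then on one hand $|S| \lesssim t \cdot (\text{few})$ and $|S|\lesssim t'\cdot(\text{few})$, but on the other hand $S$ must avoid filling up the $\asymp t \cdot t'$ product-cells, and a Chernoff/Markov bound shows that a small-but-not-tiny $S$ cannot simultaneously satisfy both — the two projections ``see'' too much of the vertex set.

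For the $\omega(1)$-expansion part (sets of size $o(|\CV''|)$), I would iterate this: if $|S| = o(|\CV''|)$, then in fact along \emph{typical} coordinate-1 cells touched by $S$ the fraction of that cell occupied by $S$ tends to $0$, so the other $\mu(1-o(1)) = \omega(1)$ vertices of each such cell contribute to $N_{ext}(S)$, and a careful accounting (only a $o(1)$ fraction of these can be double-counted or already in $S$) yields $|N_{ext}(S)\cap \CV''| \ge (1-o(1))\mu \cdot (\text{number of well-occupied cells}) = \omega(|S|)$. Throughout, the probabilistic content is: (i) concentration of the number of $\CV''$-vertices in each cell and in unions of cells (standard Chernoff, since positions and weights are independent), and (ii) a union bound over all ``bad'' $S$ showing the expected number of non-expanding sets is $o(1)$ — here one counts $S$ by its projection footprint rather than vertex by vertex, which keeps the entropy term $\binom{|\CV''|}{|S|}$ in check against the exponentially small probability that all of the $\omega(1)$-sized relevant cells fail to spill over. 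The main obstacle I anticipate is precisely making the \emph{two-coordinate} argument rigorous and uniform over all set sizes $s$ up to $\tfrac12|\CV''|$: one must show that \emph{no} configuration of $s$ vertices can hide from both coordinate projections simultaneously, and near $s = \Theta(|\CV''|)$ the ``near-clique in each cell'' heuristic degrades, so a somewhat delicate interpolation — or a clean deterministic lemma of the form ``any two axis-projections whose fibers each have superconstant occupancy force either expansion or constant density'' — will be needed to close the gap between the $\eps|S|$ regime and the $\omega(|S|)$ regime.
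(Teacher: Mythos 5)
Your core strategy coincides with the paper's own proof: partition each coordinate into strips of width $\Theta(\log^{2\gamma}n/n)$, observe that each strip contains $\Theta(\log^{\gamma(3-\tau)}n)=\omega(1)$ vertices of weight $\Theta(\log^\gamma n)$, any two of which are adjacent with probability $\Theta(1)$ (this is exactly where $\gamma>\frac{1}{3-\tau}$ and the strip width enter), and rule out non-expanding sets by a first-moment union bound over ``footprints'', i.e.\ over choices of $k$ strips per coordinate together with all sets $S$ confined to their union; the exponent $d\ge 2$ in the confinement probability $(k\ell)^{ds}$ is what beats the entropy of choosing $S$ and the strips. Two corrections of phrasing: a non-expanding set must \emph{concentrate} in (and essentially fill) few strips in every coordinate, not be ``spread thin''; and no deterministic two-projection lemma or delicate interpolation is needed below $\eps|\CV'|$ --- the plain union bound already works uniformly for all sizes $s\le\eps|\CV'|$, with the expansion factor degrading as $(|\CV'|/s)^{1-1/c_d}$ exactly as in the paper.

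Two steps of your reduction, however, would fail as written. First, the dyadic gluing from $\CG''$ to $\CG'$: within-class expansion does not hold for the sparse top weight classes (a class near the maximum weight $\approx n^{1/(\tau-1)}$ contains only $O(1)$ vertices, so it can contain a vertex isolated within its class and cannot expand on its own), and ``each class expands individually'' also gives no control when $S$ nearly exhausts some class. The paper avoids gluing altogether: for \emph{every} vertex of $\CV'$, regardless of its weight, it counts only neighbours with weight in one fixed window $[c_1'\log^\gamma n, c_2'\log^\gamma n]$ inside that vertex's own strip; since each of the $k$ touched strips then contributes $\Omega(\log^{\gamma(3-\tau)}n)$ such neighbours, subtracting $|S|$ finishes the argument simultaneously for $\CG'$ and $\CG''$. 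You should replace the gluing by this device. Second, the regime $\eps|\CV'|\le|S|\le\tfrac12|\CV'|$ (constant expansion) is genuinely outside the strip argument, but it is not something to interpolate towards: the paper imports it from the no-sublinear-separator result of~\cite{lengler2017existence}, adapted to the induced subgraph, and you should do the same. Finally, the statement is about actual degrees, not weights, so you also need the routine concentration step sandwiching the degree-defined vertex sets between two weight-defined ones (every vertex of degree $\ge\log^\gamma n$ has weight $\ge C_1\log^\gamma n$ whp, and conversely), as in the paper's degree-concentration lemma.
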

In other words, the induced subgraph $\CG'$ is a vertex expander with an expansion factor $\omega(1)$ for all subsets $S$ whose size is $o(|\CV'|)$. The exact value of this expansion factor depends on the parameter $\gamma$ and thus on the subgraph that we consider. It is actually polylogarithmic if $S$ contains at most a polylogarithmic fraction of $\CV'$. See Theorem \ref{thm:main-thm} for the full statement containing the precise expansion factor we obtain. 
We mention here that the case of linear-sized subsets $S$ was already treated in~\cite{lengler2017existence}, as we discuss in more detail in Section~\ref{sec:mainresult}. The main contribution of this paper is to treat the case $|S|=o(|\CV'|)$ and in particular to show that in this case we have an expansion factor of $\omega(1)$.

\begin{figure}
    \centering
     \includegraphics[width=0.45\linewidth]{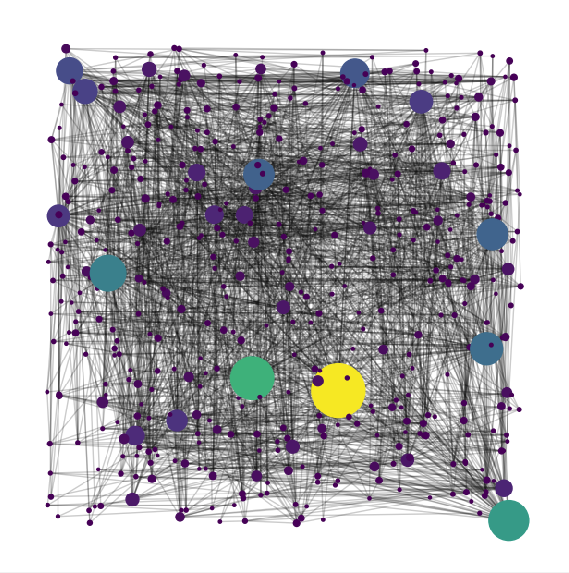}
    \includegraphics[width=0.45\linewidth]{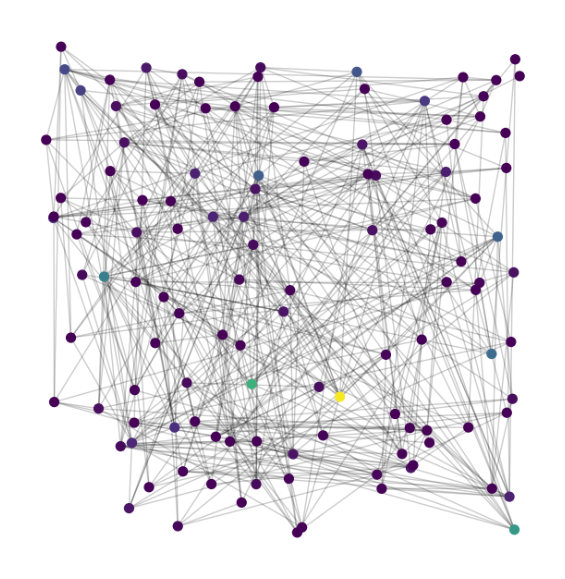}
   
    \caption{MCD GIRG (left), with parameters $d=2, n=500, \tau=2.1, \alpha=1.5$, and its induced subgraph of weights larger than $\log^2 n$ (right). Larger (and lighter-coloured) disks indicate nodes of larger weight/degree. Disks are only scaled on the left illustration to increase visibility.}
    \label{fig:sample_mcd_girg}
\end{figure}

\subsection{Implications}
Expander graphs have an astonishing wealth of applications across mathematics and computer science -- and consequently, MCD-GIRGs inherit a rich list of such useful and desirable properties. For an introduction to the field, we refer the reader to the survey papers \cite{Hoory2006expander,lubotzky2011expandergraphspureapplied}. We discuss some of the most important applications and properties of expanders in the following paragraphs.
\paragraph{Remark.} Some of the applications described below assume that the expander graph in question is \emph{almost regular}, by which we mean that the minimum and maximum degree in the graph only differ by a constant factor. But since our result also holds in the graph $\CG'' = (\CV'',\CE'')$, and in $\CG''$ the minimum and maximum degree (induced in $\CG''$) are easily seen to differ by at most factor of $2+o(1)$, this subgraph is indeed regular up to a constant factor. We also mention that $\CV''$ is large, $|\CV''| = n/\log^{O(1)}n$, and the whole giant component of $\CG$ is well-connected to $\CV''$. Whp, all vertices of degree larger than $2\log^\gamma n$ have neighbours in $\CV''$, and it was shown in the so-called \emph{bulk lemma} in~\cite{bringmann2024average} that whp every vertex in $\CV$ in the giant component has a path of at most polylogarithmic length to a vertex of degree at least $\log^\gamma n$.
\begin{itemize}
    \item \textit{Rapid Mixing of Random Walks:} On almost regular expanders, random walks will quickly reach a near-stationary distribution. This allows for random sampling of vertices  (within the induced subgraph), even if the graph is only accessible locally, which can be used for representative polling and computation of expectations of vertex properties. A practical application of random walks is the PageRank algorithm, which is used by search engines on the web graph. If the graph has expansion properties then PageRank converges quickly. Rapid mixing may also be beneficial for property testing: sampling a small number of vertices and their neighbourhoods may demonstrate more efficiently whether a given graph has some local property -- or at least has a large subgraph with this property.
    \item \textit{Distributed Computing:} The rapid mixing property is very useful in distributed computing. Algorithms that rely on message passing between nodes will converge quickly on this type of graph. One example is best-of-$k$ voting (for any $k\ge 2$), where each node randomly samples $k$ neighbours and updates its opinion to the majority opinion among them: On certain expander graphs, with best-of-$k$ voting, consensus is reached in $O(\log n)$ rounds \cite{shimizu2024quasi}. This matches the asymptotic consensus time of best-of-three voting on complete graphs~\cite{ghaffari2018nearly}. 
    In graphical models, belief propagation, a message-passing algorithm for inference, often converges quickly on graphs with good expansion. Similarly, algorithms like distributed averaging and load balancing may benefit from good expansion. This has great potential for the design of distributed algorithms of this type for MCD-GIRGs, where the algorithm is first run on the induced subgraph $\CG'$ (or the almost regular $\CG''$), where all these theoretical performance guarantees are present.
    \item \textit{Robustness to Edge or Vertex Failures:} Expanders are robust to random failures. Even after removing a random small constant fraction of edges or vertices, the remaining graph is still well-connected. Since expanders offer several short paths, low-congestion and robust routing are possible~\cite{racke2002minimizing}.
    \item \textit{Information Dissemination:} Information -- or a rumour or disease -- can spread very quickly through the graph. We get for free that a rumour will spread to most large-degree vertices in polylogarithmically many rounds, since the expansion property ensures rapid spreading through the subgraph of vertices with degree $[\log^\gamma n,2\log^\gamma n]$, as well as to all their neighbours. For the same reason, in an SI model the infection will spread among the large-degree vertices in $o(\log n)$ rounds.

    \item \textit{Fast Algorithms for Graph Problems:} Problems such as determining connectivity or finding shortest (approximate) paths 
    can often be solved more efficiently on expanders or graphs with expander-like components. 
    More generally, many \textit{NP-hard problems}, such as graph colouring, finding large independent sets or balanced separators, minimum bisection, maximum cut, as well as determining the graph bandwidth, have good approximation algorithms on expanders~\cite{Chuzhoy2019ADA}. Some of these approximation algorithms are among the most central and widely used tools in algorithm design, especially due to their natural connections to the hierarchical divide-and-conquer paradigm~\cite{racke2002minimizing}.
    \item Expander graphs also appear in many other algorithms. In streaming algorithms, when processing a stream of edges representing the graph, expander properties allow to maintain certain graph characteristics such as connectivity using limited memory. 
\end{itemize}
\subsection{Related work}
\label{sec:related_work}
The GIRG model with classical Euclidean geometry has been intensively studied over the last decade. A popular predecessor model are \emph{Hyperbolic Random Graphs} (HRG)~\cite{krioukov2010hyperbolic}, which turned out to be a special case of GIRGs~\cite{bringmann2019geometric}. Although HRG can be defined for higher dimensions, they are traditionally only studied in the one-dimensional case $d=1$.\footnote{In the HRG terminology this still involves a two-dimensional disk since the weight is encoded as one dimension. Nevertheless, it corresponds to the $d=1$ case.} Note that in this case there are no differences between Euclidean geometry and MCD. The GIRG model was invented independently, with only minor technical deviations, as \emph{Scale-Free Percolation} (SFP)~\cite{deijfen2013scale} and has been studied under this name as well. See~\cite{komjathy2020explosion} for a comparison of SFP and GIRG. 

A particularly important result on Euclidean GIRGs is that they have small separators: if we split the geometric space into two halves via a hyperplane, then there are only $O(n^{1-\eps})$ edges crossing this hyperplane, for some $\eps >0$ that depends on the model parameters~\cite{bringmann2019geometric}. This implies that it is possible to split the giant component (of either $\CG$ or $\CG'$) into two components of equal size by removing $O(n^{1-\eps})$ edges from the graph. Those separators exist at all scales. Consider any cube $\CC$ containing $x$ vertices, then the expected number of edges crossing the boundary of this cube is $O(x^{1-\eps})$. In particular, this yields a vertex set of size $x$ whose exterior neighbourhood has size at most $O(x^{1-\eps})$, which is asymptotically smaller than $x$ if $x\to\infty$. Not only is this construction possible for any growing function $x = \omega(1)$, but we may also place the cube $\CC$ at \emph{any} position in the geometric space. This means that small separators and non-expanding sets are ubiquitous in Euclidean GIRGs.

While there is ample work on the models with Euclidean geometry, work on substantially different geometries is still sparse. The first paper to consider this was~\cite{bringmann2024average}, where the MCD-GIRG model was introduced as an example for a non-metric distance function and it was shown that the power-law degree sequence, the $O(\log \log n)$ average hop distance, and the poly-logarithmic diameter are independent of the underlying geometry. The first substantial structural difference between MCD-GIRGs and Euclidean GIRGs was shown in~\cite{lengler2017existence}, where it was proven that the giant component of MCD-GIRGs does not have sublinear separators, in contrast to the Euclidean case. In other words, every vertex subset of linear size of the giant component has an exterior neighbourhood of size $\Omega(n)$ in MCD-GIRGs. This result is both weaker and stronger compared to our result: it is stronger because it includes all vertices, also vertices of smaller weight, as long as they are in the giant component. But it is weaker because it only covers very large sets of size $\Theta(n)$, not smaller sets. This makes it less useful for applications, since expanders need that vertex sets at all scales are expanding. Moreover, the expansion factor was naturally only constant in~\cite{lengler2017existence}, whereas we obtain a much larger expansion factor unless the vertex set has linear size.

The MCD was extended in~\cite{kaufmann2024sublinear} to general combinations of maxima and minima. The resulting distance functions were called \emph{Boolean Distance Functions} (BDF), and the corresponding network model BDF-GIRGs, which inherits the desirable properties of real-world networks captured by Euclidean GIRGs such as sparsity, small diameter and ultra-small average distances, and jointly generalizes MCD-GIRGs and Euclidean GIRGs. An example of such a BDF would be $dist \coloneqq \min\{dist_{work}, \max\{dist_{hobby}, dist_{residence} \}\}$, indicating that two individuals are likely to know each other if they are either working in closely related fields, or if they share both the same hobby and live in the same place. There are two main results in~\cite{kaufmann2024sublinear}. 
The first one is a complete classification of the occurrence of sublinear separators. A sub-category, \emph{single-coordinate outer max} (SCOM) GIRGs behave as Euclidean GIRGs in this regard and have sublinear separators of the giant component. All other BDF-GIRGs behave like MCD-GIRGs and have no such sublinear separators. This result from~\cite{kaufmann2024sublinear} thus extended the analysis from~\cite{lengler2017existence} to arbitrary BDFs. The second result in~\cite{kaufmann2024sublinear} is that all BDFs satisfy the following \emph{stochastic triangle inequality}. Consider a radius $r$ and a point $x$ in a geometric space with some BDF distance function $d$. If we sample two points $y,z$ uniformly at random from the ball of radius $r$ around $x$, then $\Pr{d(y,z) \le 2r} = \Omega(1)$, independently of $r$. Note that the probability is $1$ when $d$ is a metric, but recall that BDFs are in general not metrics, e.g.\ the MCD is not a metric. It was shown in~\cite{kaufmann2024sublinear} that this stochastic triangle inequality suffices to guarantee that all BDF-GIRGs exhibit strong clustering, more precisely a clustering coefficient of $\Theta(1)$. 

It is natural to ask whether the same characterization also applies to separators of all scales, i.e.\ whether the result of this paper generalizes to all BDFs which are not SCOM. However, our proof methods do not easily generalize to the whole class of BDFs, so this remains an open problem.

Another aspect in which Euclidean and MCD-GIRGs differ is rumour spreading. While our result immediately implies that rumours need at most polylogarithmic time to spread through an MCD-GIRG (because it takes polylogarithmic time to reach the first vertex in $\CV''$, then polylogarithmic time to reach \emph{every} vertex in $\CV''$, and finally polylogarithmic time to reach all other vertices from $\CV$), more precise results were obtained in~\cite{kaufmann2024rumour}. There it was shown that there are three different speeds at which a rumour can reach a constant fraction of the vertex set: \emph{ultra-fast} in time $O(\log \log n)$; \emph{fast} in polylogarithmic time; or \emph{slow} in time $n^{\Omega(1)}$. The latter possibility had first been observed in a slightly different model, \emph{Spatial Preferential Attachment} (SPA)~\cite{feldman2017high}. Note that our result rules out slow rumour spreading in MCD-GIRGs. However, it has already been shown in~\cite{kaufmann2024rumour} that all three regimes can occur in Euclidean GIRGs depending on the model parameters, but that rumour spreading is always ultra-fast in MCD-GIRGs (assuming the power-law exponent $\tau$ is in the interval $(2,3)$). This coincides with the non-geometric setting of Chung-Lu graphs, where rumours also spread ultra-fast when $2<\tau<3$~\cite{fountoulakis2012ultra}.

Finally, an efficient algorithm for sampling MCD-GIRGs in time $O(n)$ was recently presented in~\cite{dayan2024expressivity}. Even though several algorithms had been developed for sampling Euclidean GIRGs in linear time~\cite{bringmann2019geometric,blasius2022efficiently}, those algorithms are fundamentally based on the existence of small separators in the underlying geometric space. Hence, those algorithms could not be transferred to MCD-GIRGs. The algorithm~\cite{dayan2024expressivity} closes this gap and allows to efficiently sample and study MCD-GIRG empirically.

\paragraph{Organization of the paper}
The remainder of the paper is organized as follows. In Section~\ref{sec:notation} we formally introduce MCD-GIRGs and give the formal statement of our main result. 
Section~\ref{sec:mcd_proof} then gives a proof overview and intuition behind the overall strategy The formal proof of our main result can be found in Appendix~\ref{sec:proof_appendix}, preceded by some useful technical tools in Appendix~\ref{sec:tools_appendix}.

\section{Formal Definitions and Results}\label{sec:notation}
\paragraph{Notation and Terminology.}
We write $[n] \coloneqq \{1,..., n\}$. The $i$th component of a vector $e$ is denoted by $e_i$. In other words, we will denote the $i$th component of the position of a vertex $v$ by $x_{v,i}$ for $i \in [d]$. We denote the degree of a vertex $v$ constrained to a set of other vertices $S$ by $\deg_S(v)$. We denote by $uv$ the edge connecting two vertices $u$ and $v$. We say that an event $\mathcal{A}$ occurs with high probability if $\pr[\mathcal{A}] = 1 - o(1)$ as the number of vertices $n$ tends to $\infty$. We denote the natural logarithm of $x$ by $\log{x}$.

\subsection{(MCD-)GIRGs.}
As our ground space, we consider the $d$-dimensional torus $\T^d \coloneqq \R^d / \Z^d$. This can be viewed as the $d$-dimensional unit cube $\bre{0,1}^d$, with opposite faces identified, which has the advantage of yielding a bounded and symmetric ground space. In this topology, the absolute difference for $a, b \in [0,1]$ is defined as $\abs{a-b}_T \coloneqq \min\{\abs{a-b}, 1- \abs{a-b}\}$. As we will only deal with the torus topology, we will simply write $\abs{x}$ instead of $\abs{x}_T$.

We now define the distance function, which we will use to construct the graphs. Consider $x_u, x_v \in \T^d$ with $x_u = (x_{u,1}, ..., x_{u,d})$ and $x_v = (x_{v,1}, ..., x_{v,d})$. We define the Minimum-Component Distance (MCD) as
\begin{align*}
    \norm{x_u-x_v}_{\min} \coloneqq \min\{\abs{x_{u,i}-x_{v,i}}_T \mid 1\le i \le d \}.
\end{align*}
This distance function gives rise to the measurable volume function $V_{\min}: \R_{0}^{+} \rightarrow [0,1]$ given by $V_{\min}(r) \coloneqq \Vol\br{\{x \in \T^d \mid \abs{x} \leq r\}}$, which is symmetric, translation-invariant and continuous. For $r \rightarrow 0$, the volume function satisfies $V_{\min}(r) = \Theta(r)$ \cite{lengler2017existence}. See also  Figure~\ref{fig:illustrations} (left) for an illustration of ``balls'', which are cross-shaped in this geometry. 
The distance and the volume function together allow us to define Geometric Inhomogeneous Random Graphs (GIRGs).

GIRGs have originally been introduced in \cite{bringmann2019geometric}, though only for Euclidean distances, and have been generalized to arbitrary geometries in~\cite{bringmann2024average}. We begin by formalizing the concept of a power-law distribution.  

\begin{definition}\label{def:power-law}
    %
    Let $\tau>1$. A discrete random variable $X$ is said to follow a \emph{power-law with exponent} $\tau$ if $\pr(X = x) = \Theta(x^{-\tau})$ for $x\in \N$. A continuous random variable $X$ is said to follow a \emph{power-law with exponent} $\tau$ if it has a density function $f_X$ satisfying $f_X(x) = \Theta(x^{-\tau})$ for $x\ge 1$.
\end{definition}

Now, we are ready to formally define MCD-GIRGs. Throughout the paper, we will assume that the three parameters $\tau$, $\alpha$ and $d$ below are constants.

\begin{definition}[MCD-GIRGs]
\label{def:mcd_girgs}
 Let $\tau>2$, $\alpha>1$ and $d\in\mathbb{N}$ and let $\mathcal{D}$ be a power-law distribution on $[1,\infty)$ with exponent $\tau$. A \emph{MCD-Geometric Inhomogeneous Random Graph (MCD-GIRG)} with vertex set $\mathcal{V}=[n]$ and edge set $\mathcal{E}$ is obtained by the following three-step procedure:

 \begin{enumerate}
    \item Every vertex $v\in\mathcal{V}$ draws i.i.d.\ a \emph{weight} $w_v \sim \mathcal{D}$.
     
     \item Every vertex $v\in\mathcal{V}$ draws independently and u.a.r.\ a position $x_v\in \T^{d}$.

      \item For every two distinct vertices $u,v \in\mathcal{V}$, add the edge $uv\in\mathcal{E}$ independently with probability
      \begin{align}
      \label{eq:connection}
         \mathbb{P}[uv\in\mathcal{E} \mid w_u, w_v, x_u, x_v] = \Theta\Big(\min\Big\{\frac{w_uw_v}{n \cdot V_{min}(\|x_u-x_v\|_{min})}, 1\Big\}^{\alpha}\Big),
      \end{align}
      where the hidden constants are uniform over all $u,v$.
 \end{enumerate}  
\end{definition}

With the connection probability given above, the expected degree of a vertex $v$ is of the same order as its weight $w_v$ (Lemma 4.3 in~\cite{bringmann2024average}) and its actual degree is a sum of independent random variables which converges to the Poisson distribution in the limit. The restrictions on the range of the power-law parameter $\tau$ come from the observation that the degrees in many real networks, in particular social networks, approximately follow a power-law with $\tau$ in this range~\cite{voitalov2019scale}. In particular, such a power-law distribution has infinite second moment. Nonetheless, the resulting GIRGs are sparse, that is, have only a linear number of edges. We call an edge a \emph{strong tie} if the minimum in~\eqref{eq:connection} is taken by $1$, and \emph{weak tie} otherwise. The latter are edges that exist although they are unlikely given the weights and positions of the vertices. The parameter $\alpha$ quantifies the influence of the geometry in the sense that it controls the number of weak ties (increasing $\alpha$ reduces the number of weak ties). These weak ties are an important concept in sociology~\cite{granovetter1973weak} and influence the spreading of rumours~\cite{kaufmann2024rumour} and viral infections~\cite{komjathy2024polynomial,komjáthy2024universalgrowthregimesdegreedependent,jorritsma2020interventions}. Strong ties, on the other hand, are edges which have at least a constant probability of existing. The geometric region around a vertex $v$ of the radius $r$ that satisfies $V_{\min}(r) = w_v/n$, is also referred to as the \emph{ball of influence} of $v$. It is the region around $v$ where all edges are strong ties, regardless of the weight of the other endpoint.


\subsection{Main Result}\label{sec:mainresult}
In order to state our main result, we first formally introduce expander graphs. We call a set $S$ of vertices $\zeta$-expanding if the external neighbourhood of $S$ is at least $\zeta$ times the size of $S$ itself. 

\begin{definition}[Expanding set]\label{def:expanders}
    Let $\zeta>0$. For a graph $\CG =(\CV, \CE)$, we call a subset $S \subset \CV$ of vertices a \emph{$\zeta$-expanding set (for $\CG$)} if the external neighbourhood $N_{ext}(S) = \{ v \in \CV \setminus S : \exists u \in S \textnormal{ with } uv \in \CE\}$ of $S$ is of size at least $|N_{ext}(S)| \ge \zeta \cdot \abs{S}$. We call $\zeta$ the \emph{expansion factor} of the expanding set.
\end{definition}


We are now ready to formally state our main result, which asserts that in MCD-GIRGs, the subgraph induced by vertices of sufficiently large polylogarithmic weight is an expander with high probability. Furthermore, the expansion factor is typically of polylogarithmic order (and always at least constant).

\begin{restatable}{theorem}{MainTheorem}
    \label{thm:main-thm}
    For each $d \ge 2$ there is $c_d > 1$ such that the following holds. Let $\CG=(\CV,\CE)$ be an MCD-GIRG on $n$ vertices of dimension $d \ge 2$, power-law exponent $\tau\in(2,3)$ and $\alpha>1$, let $\gamma > \frac{1}{3-\tau}$ and $c'>0$ be constants and consider the subgraph $\CG'=(\CV',\CE')$ induced by vertices of weight at least $c'\log^{\gamma} n$. There exists a constant $\eps>0$ such that with high probability all subsets $S\subset \CV'$ satisfying $|S| \le \eps |\CV'|$ 
    are expanding sets for $\CG'$ with expansion factor at least
    \begin{align}\label{eq:expansion-factor}
    \eps\cdot\min\Big\{\log^{\gamma(3-\tau)}n, \br{\tfrac{|\CV'|}{\aS}}^{1-1/c_d}\Big\}.
    \end{align}
    The same result holds if we replace $\CG'$ by the subgraph of $\CG$ induced by all vertices with weight in the interval $[c_1'\log^{\gamma} n,c_2'\log^{\gamma} n]$ for two arbitrary constants $c_2'>c_1'>0$. Moreover, all statements are still true if we replace $\CG'$ by the graph induced by all vertices of \emph{degree} (in $\CG$) at least $c'\log^\gamma n$, or by all vertices of degree in the interval $[c_1'\log^{\gamma} n,c_2'\log^{\gamma} n]$.
\end{restatable}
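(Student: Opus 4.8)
The plan is to fix the subgraph $\CG'$ induced by vertices of weight at least $c'\log^\gamma n$ (the degree-interval variant following by the standard weight–degree correspondence noted after Definition~\ref{def:mcd_girgs}) and to exhibit, for each ``small'' set $S$, a large external neighbourhood using the one feature that distinguishes the MCD geometry: an edge between $u$ and $v$ only requires \emph{one} coordinate to agree, so every vertex $v$ with weight $w$ has a full-dimensional ``slab'' of influence $\{x : |x_{v,i}-x_i|_T \le r\}$ in \emph{each} coordinate $i$, where $V_{\min}(r)=\Theta(r)$ gives $r=\Theta(w/n)$. First I would set up the right notion of size: sort the vertices of $\CV'$ by weight and note that $\CG'$ has $|\CV'|=\Theta(n\log^{-\gamma(\tau-1)}n)$ vertices, and that vertices of weight in $[\log^\gamma n,2\log^\gamma n]$ already make up a constant fraction of $\CV'$, so it suffices to handle sets $S$ of these ``typical'' vertices and then note that a generic $S$ either already contains many high-weight vertices (which have huge influence balls and trivially expand) or is dominated by typical ones.

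The core is a union-bound / first-moment argument over all bad sets. Fix a target expansion factor $\zeta$ as in \eqref{eq:expansion-factor}, and suppose for contradiction that some $S\subset\CV'$ with $|S|=s\le\eps|\CV'|$ has $|N_{ext}(S)\cap\CV'|<\zeta s$. I would discretise the torus in each coordinate $i$ into $\Theta(1/r_s)$ intervals of length $\approx r_s$, where $r_s$ is chosen so that the union of the coordinate-$i$ slabs of the $s$ vertices of $S$ has volume $\Theta(\min\{1, s r_s\})$ in coordinate $i$ — essentially the ``one-dimensional shadow'' of $S$. The key geometric point, as in the separator-free argument of~\cite{lengler2017existence}, is that in an MCD-GIRG this shadow forces many vertices to be strong-tie neighbours of $S$: a uniformly random vertex of weight $\Theta(\log^\gamma n)$ lands in the coordinate-$i$ shadow of $S$ (in \emph{some} $i$) with probability $\Omega(\min\{1, s r_s\cdot d\})$, and then has a strong tie into $S$. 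Since there are $\Theta(|\CV'|)$ such vertices placed independently and u.a.r., a Chernoff bound shows that whp \emph{every} set $S$ of size $s$ with a small shadow would nonetheless acquire $\Omega(|\CV'|\cdot s r_s)$ external neighbours, and the count of candidate sets $\binom{|\CV'|}{s}\le (e|\CV'|/s)^s$ is killed by this exponentially-large neighbourhood as long as $s r_s |\CV'| \gg s\log(|\CV'|/s)$. Optimising $r_s$ against the weight of $S$ — the heaviest $s$ vertices have weight up to $\Theta((n/s)^{1/(\tau-1)})$, giving $r_s$ up to $\Theta((n/s)^{1/(\tau-1)}/n)$, but we only get to use weight $\log^\gamma n$ for the bulk, giving $r_s=\Theta(\log^\gamma n/n)$ — yields exactly the two regimes in \eqref{eq:expansion-factor}: the polylogarithmic factor $\log^{\gamma(3-\tau)}n$ when $s$ is small, and the power-saving $(|\CV'|/s)^{1-1/c_d}$ when $s$ is a polylogarithmic-or-larger fraction, with $c_d$ absorbing the loss from covering a $d$-dimensional shadow by $d$ one-dimensional ones and from the $V_{\min}(r)=\Theta(r)$ constants.

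The main obstacle I anticipate is making the union bound uniform over \emph{all} sets $S$ simultaneously while keeping the shadow estimate tight: a fixed $S$ is easy, but the adversary may choose $S$ to exploit fluctuations in where the vertices of $\CV'$ happen to land, so one must either (i) condition on a high-probability ``regularity'' event that the empirical placement of $\CV'$-vertices is close to uniform at scale $r_s$ for every relevant $s$, using a net over scales and a Chernoff+union bound, or (ii) work directly with the independence of the $\Theta(|\CV'|)$ potential-neighbour placements given $S$'s positions, which is cleanest but requires care that $S$ and its potential neighbours are treated as disjoint vertex populations (split $\CV'$ into the heaviest half and the rest, use one half to form $S$ and the other to build $N_{ext}$, then symmetrise). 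A secondary technical nuisance is the weak-tie vs.\ strong-tie bookkeeping — we want to \emph{lower}-bound $N_{ext}(S)$, so we may simply discard all weak ties and all of the $\alpha$-dependence, using only that a vertex in the influence slab of $u\in S$ connects to $u$ with probability $\Omega(1)$; this is where $\alpha>1$ is not actually needed for the lower bound, only $\alpha>0$, but we keep the hypothesis for consistency with Definition~\ref{def:mcd_girgs}. Finally, extracting the clean $\omega(|S|)$ statement of the simplified theorem from \eqref{eq:expansion-factor} is immediate since both $\log^{\gamma(3-\tau)}n\to\infty$ and $(|\CV'|/s)^{1-1/c_d}\to\infty$ when $s=o(|\CV'|)$.
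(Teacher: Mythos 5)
There is a genuine gap at the core of your argument: you assume that the coordinate-$i$ shadow of an arbitrary set $S$ of size $s$ (the union of the coordinate-$i$ influence slabs of its vertices, each of width $r_s=\Theta(\log^\gamma n/n)$) has volume $\Theta(\min\{1, s\, r_s\})$, i.e.\ that the slabs are essentially disjoint, and you then let a random $\CV'$-vertex hit this shadow with probability $\Omega(\min\{1,s\,r_s\})$. But this is exactly what an adversarial $S$ can defeat: take for instance all vertices of $\CV'$ lying in a sub-cube of volume $\eps$, a set of size $\Theta(\eps|\CV'|)$ whose shadow in \emph{every} coordinate has volume only $\eps^{1/d}$, smaller than $s\,r_s$ by roughly a factor $\log^{\gamma(3-\tau)}n\cdot(|\CV'|/s)^{1-1/d}$. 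For such sets your Chernoff-plus-$\binom{|\CV'|}{s}$ union bound does not control the bad event, because the expected number of shadow hits is governed by the true (possibly much smaller) shadow volume, not by $s\,r_s$. The missing ingredient is precisely the paper's Proposition~\ref{prop:strip-covering}: a first-moment count over all choices of $k$ strips in \emph{each} of the $d$ dimensions and all $s$-subsets, showing that whp no set of size $s$ fits into fewer than $k = s\cdot\Omega\bigl(\min\bigl\{1, \log^{-\gamma(3-\tau)}n\cdot(|\CV'|/s)^{1-1/c_d}\bigr\}\bigr)$ strips in every coordinate. This quantitative clustering bound is where $d\ge 2$ and the constant $c_d$ enter, and it is what produces the second regime $(|\CV'|/|S|)^{1-1/c_d}$ in \eqref{eq:expansion-factor}. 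Your alternative explanation of that regime via the large influence slabs of the heaviest $s$ vertices cannot be right: the theorem also holds for the subgraph induced by weights confined to $[c_1'\log^\gamma n, c_2'\log^\gamma n]$, where no heavy vertices are available, and it must cover sets $S$ containing no heavy vertices at all.

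A structural remark that would also resolve the dependence issue you flag as your main obstacle: the paper never takes a union bound over subsets $S$ in the edge-exposure step. It splits the proof into (a) the purely geometric strip-covering statement above, which involves only vertex positions and hence tolerates a union bound over all $s$-subsets and all strip choices, and (b) a per-vertex statement (Proposition~\ref{prop:inclusive_nbhd}) that whp \emph{every} vertex of $\CV'$ has $\Omega(\log^{\gamma(3-\tau)}n)$ same-strip neighbours of weight in $[c_1'\log^\gamma n,c_2'\log^\gamma n]$, proved by Chernoff and a union bound over the $n^{1-o(1)}$ vertices only; picking one vertex per intersected strip and restricting to that strip avoids over-counting, and subtracting $|S|$ yields the external neighbourhood. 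Constant-size sets need a separate (easy) argument since the counting step requires $s=\omega(1)$. If you insert a clustering bound of the strip-covering type into your plan, the remaining elements (using only strong ties, Chernoff concentration, and the weight--degree correspondence for the degree-threshold variants) do align with the paper's proof.
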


The larger the weights we consider, the better the expansion factor $\log^{\gamma(3-\tau)}n$ becomes. Eventually, when the size of $S$ approaches $\CV'$, this expansion factor is no longer achievable simply because we run out of vertices. In this case, we prove a smaller expansion factor, but this factor is still $\omega(1)$ whenever $|S| =o(|\CV'|)$ and becomes constant for $|S| = \Theta(|\CV'|)$. Note that this is optimal, simply because the number of remaining vertices in $\CV'\setminus S$ is then $O(|S|)$.

We highlight that the assumption on $d\ge2$ is crucial, as in one dimension, the MCD is equivalent to the Euclidean metric, where we encounter small separators at any scale. We also remark that each vertex in $\CV$ has probability $\Theta(\log^{\gamma(1-\tau)} n)$ to have weight at least $\log^\gamma n$. Therefore, whp $|\CV'| = \Theta(n\log^{\gamma(1-\tau)} n) = n^{1-o(1)}$.

Moreover, let us discuss where the condition $\gamma > \tfrac{1}{3-\tau}$ comes from. By~\cite[Lemma 10]{koch2021bootstrap} the expected degree of a vertex $v\in \CV'$ of weight $w_v$ in $\CG'$ is $\mathbb{E}[\deg_{\CG'}(v)] = \Theta(w_v\log^{\gamma(2-\tau)})$. Since $w_v \ge \log^\gamma n$, this is $\Omega(\log^{\gamma(3-\tau)}n) = \omega(\log n)$ as $\gamma > \tfrac{1}{3-\tau}$. Hence, every vertex has expected degree $\omega(\log n)$ in $\CG'$, and also in $\CG''$. In particular, whp $\CG'$ and $\CG''$ do not have isolated vertices. 

The condition on $\gamma$ is also tight, for the same reason. By the same calculation as above, for $\gamma < \tfrac{1}{3-\tau}$ the expected degree of a vertex of weight $\Theta(\log^\gamma n)$ in $\CG'$ or $\CG''$ is $o(\log n)$, and it is not hard to see that then the induced subgraph contains isolated vertices whp, which implies that it is not an expander. Indeed, since the degree of a vertex $v$ with given weight $w_v = \Theta(\log^\gamma n)$ in $\CG'$ is binomially distributed, any vertex of such a weight - of which there are $\Theta(n \cdot \log^{\gamma(1-\tau)}n) = n^{1-o(1)}$ many - would be isolated in $\CG'$ with probability at least $e^{-o(\log n)} = n^{-o(1)}$. So in expectation, there would be $\Theta(n^{1-o(1)} \cdot n^{-o(1)}) = n^{1-o(1)}$ isolated vertices in the induced subgraph. Those form a (large) set which is not expanding at all, as it has zero neighbours. This also holds whp using an Azuma-Hoeffding-type estimate.\footnote{See also Remark 5.13. in~\cite{jorritsma2024clustersizedecaysupercriticalkernelbased} which shows that whp there exists a vertex with weight less than $\log^{\frac{1}{3-\tau}}n$ which lies outside the giant component.} Hence, the condition $\gamma > \tfrac{1}{3-\tau}$ is tight.

Finally, we note that the case $\eps |\CV'| \le |S| \le \tfrac{1}{2}|\CV'|$ has already been treated in very similar form in~\cite{lengler2017existence}, which is why we exclude this case from the statement. There, it was shown that any linear-sized vertex subset of the giant component of $\CG$ has an external neighbourhood of linear size and hence constant expansion. This statement and its proof can readily be adapted to give the analogous result for the induced subgraph $\CG'$. Since $\CG'$ is connected, we may also drop the condition on the giant component, hence obtaining that every set $S\subseteq \CV'$ with $\eps |\CV'| \le |S| \le \tfrac{1}{2}|\CV'|$ has external neighbourhood of size at least $\eps|\CV'|$.




\section{Proof Sketch}\label{sec:mcd_proof}
In this section, we provide an intuitive overview of the proof. The formal, detailed proof can be found in Appendix~\ref{sec:proof_appendix}.

We want to lower-bound the external neighbourhood of a given vertex set $S$. 
We begin by sampling an MCD-GIRG and restricting to the subgraph induced by vertices of weight at least $\log^{\gamma}n$. Next, we partition the ground space into \emph{strips} of width $\ell=\Theta(\tfrac{\log^{2\gamma}n}{n})$ along each dimension. Figure \ref{fig:illustrations} (right) illustrates this. The idea behind this strip width is that two vertices of weight $\log^{\gamma}n$ in the same strip have a connection probability in $\Theta(1)$.

For each (set) size $s$ we will find a suitable value $k$ such that whp the following claim holds: 

\emph{``There is no set $S$ of vertices with $\abs{S} \ge s$ such that in every dimension, $S$ intersects at most $k$ strips.''}

In a second step, we show that every vertex $v\in \CV'$ has whp $\Omega(w_v\log^{\gamma(2-\tau)}n) = \Omega(\log^{\gamma(3-\tau)} n)$ neighbours \emph{in the same strip}, in both $\CG'$ and $\CG''$. Since every set $S$ of size $s$ intersects at least $k$ strips in some dimension, the vertices in $S$ have at least $k\cdot \Omega(\log^{\gamma(3-\tau)}n)$ distinct neighbours. 
%
To obtain the size of the \emph{external} neighbourhood, we must ensure that vertices inside $S$ itself are not counted. 
By simply subtracting $\abs{S}$ from the total neighbourhood count, we obtain the desired lower bound on the size of the external neighbourhood.
\newcommand{\drawTriangle}[3]{%
    \fill[black] (#1,#2) -- ++(#3/2,0) -- ++(-#3/2,#3) -- ++(-#3/2,-#3) -- cycle;
}
\vspace{12pt}
\begin{figure}[ht]
\label{fig:balls}
\centering
\begin{tikzpicture}[scale=4]
    
    \begin{scope}[shift={(0,0)}] 
        \fill[blue!50, opacity=0.5] (0.1,0) rectangle (0.3,1);
        \fill[blue!50, opacity=0.5] (0,0.1) rectangle (0.1,0.3);
        \fill[blue!50, opacity=0.5] (0.3,0.1) rectangle (1,0.3);
        
        \fill[red!50, opacity=0.5] (0.7,0) rectangle (0.9,1);
        \fill[red!50, opacity=0.5] (0,0.7) rectangle (0.7,0.9);
        \fill[red!50, opacity=0.5] (0.9,0.7) rectangle (1,0.9);

        \draw[thick] (0,0) rectangle (1,1);
        
        \drawTriangle{0.2}{0.18}{0.05}
        \drawTriangle{0.8}{0.78}{0.05}
    \end{scope}

    \begin{scope}[shift={(1.5,0)}] 
        \fill[red!50, opacity=0.5] (0.2,0) rectangle (0.4,1);
        
        \fill[red!50, opacity=0.5] (0,0.4) rectangle (0.2,0.6);
        \fill[red!50, opacity=0.5] (0.4,0.4) rectangle (0.8,0.6);
        \fill[red!50, opacity=0.5] (0.8,0.4) rectangle (1,0.6);
        \fill[red!50, opacity=0.5] (0,0.6) rectangle (0.2,0.8);
        \fill[red!50, opacity=0.5] (0.4,0.6) rectangle (0.8,0.8);
        \fill[red!50, opacity=0.5] (0.8,0.6) rectangle (1,0.8);
        
        \draw[thick] (0,0) rectangle (1,1);
        \draw[step=0.2, gray, very thin] (0,0) grid (1,1);
        
        \foreach \x/\y in {0.25/0.25, 0.10/0.35, 0.12/0.09, 0.17/0.85, 0.23/0.88, 0.47/0.35, 0.52/0.16, 0.63/0.89, 0.64/0.37, 0.72/0.18, 0.81/0.53, 0.85/0.25,0.07/0.50, 0.09/0.65, 0.56/0.62, 0.93/0.74} {
            \fill[black] (\x,\y) circle[radius=0.015];
        }
        
        \drawTriangle{0.33}{0.43}{0.05}
        \drawTriangle{0.24}{0.67}{0.05}
        \drawTriangle{0.34}{0.72}{0.05}
        
        \draw[decorate, decoration={brace, amplitude=5pt, mirror}, thick] (1.03,0.2) -- (1.03,0.4)
            node[midway, right, xshift=+5pt] {strip};
    \end{scope}
\end{tikzpicture}
\caption{\textbf{Left:} Balls in the Minimum-Component Distance are cross-shaped.
\textbf{Right:} Illustration of the proof technique in two dimensions. The vertices marked by triangles form the adversarially chosen set $S$ for $s = 3$. Although the adversary has managed to have $S$ only intersect with one vertical strip, it does not achieve $k=1$ because $S$ intersects with two horizontal strips. In fact, for $s=3$ and $k= 1$ the claim holds as there is no set of three vertices spanning only one strip in each dimension.}
\label{fig:illustrations}
\end{figure}
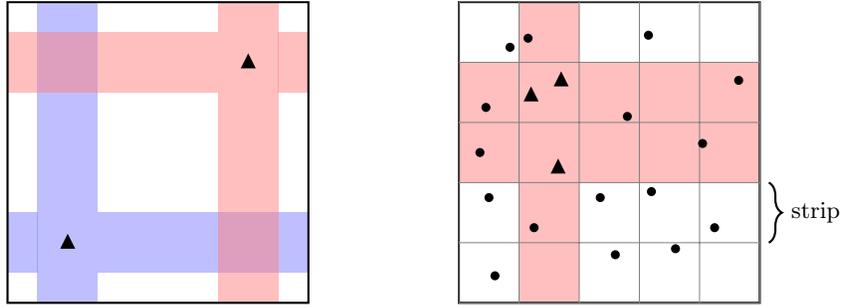
\vspace{-12pt}

Formalizing the strategy outlined above, our main result will leverage the following two key propositions, which are proven in Appendix \ref{sec:proof_appendix} along with the theorem itself. The first proposition provides a lower bound on the number of strips\footnote{For the formal definition of $i$-strips see Definition~\ref{def:strip}.} which a set of at least $s$ many vertices will intersect whp. 
\begin{restatable}{proposition}{stripcovering}\label{prop:strip-covering}
    For each $d \ge 2$ there exists $c_d>1$ such that the following holds. Let $\CG=(\CV,\CE)$ be an MCD-GIRG on $n$ vertices of dimension $d \ge 2$ with power-law exponent $\tau\in(2,3)$ and $\alpha>1$, let $\gamma > \frac{1}{3-\tau}$ and $c'>0$ be consatnts and consider the subgraph $\CG'=(\CV',\CE')$ induced by vertices of weight at least $c'\log^{\gamma} n$. For every $s=\omega(1)$ there exists
    \begin{align}\label{eq:k-condition}
        k = s \cdot \Omega\Big(\min\Big\{1, \tfrac{1}{\log^{\gamma(3-\tau)}n} \cdot \br{\tfrac{|\CV'|}{s}}^{1-1/c_d}\Big\}\Big),
    \end{align}
    such that the following holds with high probability: for every $S\subset \CV'$ with $|S| \ge s$, there is a coordinate $1\le i \le d$ such that $S$ has non-empty intersection with at least $k$ distinct $i$-strips.
    \end{restatable}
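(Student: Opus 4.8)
The strategy is to bound the failure probability by a union bound over all ``too concentrated'' configurations, and then take $k$ as large as this bound allows. Fix $s=\omega(1)$, recall the strip width $\ell=\Theta(\log^{2\gamma}n/n)$ from the overview (so there are $m=\Theta(n/\log^{2\gamma}n)$ many $i$-strips per coordinate), and set $p:=\mathbb{P}[w_v\ge c'\log^{\gamma}n]=\Theta(\log^{\gamma(1-\tau)}n)$, the probability that a vertex lies in $\CV'$. A Chernoff bound gives whp $|\CV'|=(1\pm o(1))np=:N=\Theta(n\log^{\gamma(1-\tau)}n)$, hence $N/m=\Theta(\log^{\gamma(3-\tau)}n)$ and $m=\Theta(N\log^{-\gamma(3-\tau)}n)$. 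I would condition on the weight sequence, so that $\CV'$ is a fixed set of this size and the positions $(x_v)_{v\in\CV'}$ are i.i.d.\ uniform on $\T^d$. Let $\mathcal B$ be the event that the conclusion fails, i.e.\ that some $S\subseteq\CV'$ with $|S|\ge s$ meets fewer than $k$ distinct $i$-strips for every coordinate $i$. If $\mathcal B$ occurs then $S$ lies in a region $R=\bigcap_{i=1}^d U_i$ with each $U_i$ a union of at most $k$ of the $i$-strips; such $R$ is a union of at most $k^d$ axis-parallel boxes of side $\le\ell$, so $\Vol(R)\le(k\ell)^d$, and there are at most $\binom{m}{\le k}^d\le(em/k)^{kd}$ of them. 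Since the number of $\CV'$-vertices inside a fixed $R$ is $\mathrm{Bin}(N,\Vol(R))$, a union bound gives, with $\mu:=N(k\ell)^d$,
\[
  \mathbb{P}[\mathcal B]\ \le\ (em/k)^{kd}\cdot\mathbb{P}\big[\mathrm{Bin}(N,(k\ell)^d)\ge s\big]\ \le\ (em/k)^{kd}\,(e\mu/s)^s .
\]

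It remains to choose $k$ of the order in~\eqref{eq:k-condition} and show $\mathbb{P}[\mathcal B]=o(1)$. I would fix a constant $c_d\in(1,d)$ --- the only use of $d\ge2$ --- and set $k:=\eta\,s\,\min\{1,\ \log^{-\gamma(3-\tau)}n\,(N/s)^{1-1/c_d}\}$ for a small constant $\eta>0$ to be pinned down, so that (for $n$ large and $\eta$ small) $1\le k\le m$ and $\mu/s=\Theta(N(k/m)^d/s)<e^{-1}$, hence $\log\mathbb{P}[\mathcal B]\le kd\log(em/k)+s(1+\log(\mu/s))$. I would then split on the minimum. \emph{Case A} ($\min=1$, i.e.\ $s\le s_1:=N\log^{-\gamma(3-\tau)c_d/(c_d-1)}n$): here $k=\eta s$, and writing $v:=\log(N/s)\ge v_1:=\tfrac{c_d}{c_d-1}\gamma(3-\tau)\log\log n$ a direct computation gives
\[
  \log\mathbb{P}[\mathcal B]\ \le\ s\big[(d\eta-(d-1))v+\gamma(3-\tau)d(1-\eta)\log\log n+O(1)\big].
\]
Since $c_d>1$ forces $\eta<1-c_d/d<1-1/d$, the coefficient of $v$ is negative, so the bracket is maximal at $v=v_1$, where it equals $\gamma(3-\tau)\tfrac{(c_d-d)+d\eta}{c_d-1}\log\log n+O(1)$, which is $-\Omega(\log\log n)$ once $\eta<1-c_d/d$. \emph{Case B} ($\min=\log^{-\gamma(3-\tau)}n\,(N/s)^{1-1/c_d}<1$, i.e.\ $s_1<s\le N$): with $\beta:=(c_d-1)/c_d$ and $v:=\log(N/s)\in[0,v_1)$ one has $k=\Theta(\eta s\,e^{\beta v}\log^{-\gamma(3-\tau)}n)$, $m/k=\Theta(e^{v/c_d}/\eta)$ and $\mu/s=\Theta(\eta^d e^{-(d/c_d-1)v})=O(\eta^d)$, so $\log\mathbb{P}[\mathcal B]\le-s\,g(v)$ with
\[
  g(v)=\big(\tfrac{d}{c_d}-1\big)v+A-c\,\eta\,e^{\beta v}\big(\tfrac{v}{c_d}+\log\tfrac1\eta+O(1)\big)\log^{-\gamma(3-\tau)}n,
\]
where $c>0$ is absolute and $A:=d\log(1/\eta)-O(1)>0$ for small $\eta$. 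The subtracted term is $e^{\beta v}$ times an increasing affine function of $v$, hence convex, so $g$ is concave; $g(0)=A-o(1)>0$ (because $\log^{-\gamma(3-\tau)}n\to0$) and, using $e^{\beta v_1}=\log^{\gamma(3-\tau)}n$, $g(v_1)=\gamma(3-\tau)\tfrac{d-c_d}{c_d-1}\log\log n-O(\eta\log\log n)+O(1)>0$ for small $\eta$; by concavity $g\ge\min\{g(0),g(v_1)\}=\Omega(1)$ on $[0,v_1]$. In both cases $\mathbb{P}[\mathcal B]\le e^{-\Omega(s)}=o(1)$ since $s=\omega(1)$, which together with the concentration step proves the proposition (the hidden $\Omega$-constant in~\eqref{eq:k-condition} being $\eta$).

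I expect the crux to be this union-bound balance: the number $\binom{m}{\le k}^d$ of candidate regions is enormous, and one must choose $c_d<d$ and then $\eta$ small enough that the small-ball factor $(e\mu/s)^s$ dominates uniformly over all $\omega(1)\le s\le|\CV'|$. The delicate regimes are the threshold $s\approx s_1$ separating the two cases and the case $s=\Theta(|\CV'|)$ (i.e.\ $v$ near $0$); the concavity of the exponent $g$ is what makes the latter go through cleanly.
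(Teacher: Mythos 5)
Your proposal is correct and follows essentially the same route as the paper's proof: a union bound over all choices of at most $k$ strips per coordinate together with the probability that $s$ of the uniformly placed vertices of $\CV'$ fall into the resulting region, with the same choice of $k$ and the same case split according to which term attains the minimum in \eqref{eq:k-condition}. The only differences are bookkeeping ones — you phrase the vertex count as a binomial tail and control the exponent via the substitution $v=\log(N/s)$ and a concavity argument, whereas the paper expands the same binomial-coefficient bound with Stirling and direct algebraic estimates — and both verifications go through.
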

The proof of Proposition~\ref{prop:strip-covering} is a counting argument. We simply count the number of sets of $k$ strips in each dimension and the number of sets $S$ of $s$ vertices, and estimate the probability that $S$ is contained in the chosen strips. By a union bound over all sets of strips and all $S$ we show that whp there exists no combination of $k$ strips per dimension and set $S$ that falls entirely into these strips.

The second proposition gives a lower bound on the number of neighbours of weight $\Theta(\log^\gamma n)$ of any set $S\subseteq \CV'$, expressed as a function of $k$.

\begin{restatable}{proposition}{inclusivenbhd}
\label{prop:inclusive_nbhd}
    Let $\CG=(\CV,\CE)$ be an MCD-GIRG on $n$ vertices of dimension $d \ge 2$ with power-law exponent $\tau\in(2,3)$ and $\alpha>1$, let $\gamma > \frac{1}{3-\tau}$ and $c'>0$ $c_2'> c_1'>0$ be constants and consider the subgraph $\CG'=(\CV',\CE')$ induced by vertices of weight at least $c'\log^{\gamma} n$. Then with high probability the following holds: for every $S \subset \CV'$ that has non-empty intersection with at least $k$ distinct $i$-strips (for some coordinate $1\le i \le d$), the number of neighbours $N_{c_1',c_2'}(S)$ with weights in the interval $[c_1'\log^\gamma n,c_2'\log^\gamma n]$ of $S$ is at least 
    \begin{align*}
        N_{c_1',c_2'}(S) \ge \Omega(k \cdot \log^{\gamma (3-\tau)}n).
    \end{align*}
\end{restatable}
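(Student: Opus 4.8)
The plan is to reduce the statement to a local estimate and then glue. The local estimate is: whp, for every coordinate $i\in[d]$ and every vertex $v\in\CV'$, the vertex $v$ has at least $c\log^{\gamma(3-\tau)}n$ neighbours of weight in $[c_1'\log^\gamma n, c_2'\log^\gamma n]$ whose $i$-th coordinate lies in the same $i$-strip as $v$, for a suitable constant $c>0$. Granting this, the proposition is immediate: if $S\subset\CV'$ meets at least $k$ distinct $i$-strips $T_1,\dots,T_k$, pick $v_j\in S$ with $x_{v_j,i}\in T_j$ for each $j$; the local estimate applied to each $v_j$ produces $\ge c\log^{\gamma(3-\tau)}n$ neighbours of $v_j$ of the prescribed weight, all with $i$-th coordinate in $T_j$. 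Since a vertex has a unique $i$-th coordinate and the $T_j$ are pairwise disjoint, these neighbour sets are pairwise disjoint, and each consists of neighbours of $S$; hence $N_{c_1',c_2'}(S)\ge k\cdot c\log^{\gamma(3-\tau)}n=\Omega(k\log^{\gamma(3-\tau)}n)$. Note that $N_{c_1',c_2'}(S)$ is the inclusive neighbourhood count, so it does not matter whether some of these neighbours lie in $S$.

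To prove the local estimate, I would first record a uniform lower bound on the relevant connection probabilities. The strip width is $\ell=\Theta(\log^{2\gamma}n/n)$, and $V_{\min}(\ell)=\Theta(\ell)$ by the small-radius behaviour $V_{\min}(r)=\Theta(r)$. If $v\in\CV'$ and $u$ has weight $w_u\ge c_1'\log^\gamma n$ and shares $v$'s $i$-strip, then $\norm{x_u-x_v}_{\min}\le\ell$, so
\[
\frac{w_uw_v}{n\,V_{\min}(\norm{x_u-x_v}_{\min})}\ \ge\ \frac{c'c_1'\log^{2\gamma}n}{n\,V_{\min}(\ell)}\ =\ \Omega(1),
\]
and by the uniform hidden constant in~\eqref{eq:connection}, conditionally on all weights and positions, the edge $uv$ is present with probability at least a constant $p_0>0$. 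Now fix $i$ and $v\in\CV'$, let $T$ be its $i$-strip, and let $Z$ be the number of other vertices $u$ with $x_{u,i}\in T$ and $w_u\in[c_1'\log^\gamma n,c_2'\log^\gamma n]$. Each other vertex qualifies independently with probability $\ell\cdot\Theta(\log^{\gamma(1-\tau)}n)=\Theta(\log^{\gamma(3-\tau)}n/n)$ (uniform position and power-law weight, the constants $c_1',c_2'$ absorbed into $\Theta$), so $\mu:=\mathbb{E}[Z]=\Theta(\log^{\gamma(3-\tau)}n)$, which is $\omega(\log n)$ precisely because $\gamma>\tfrac1{3-\tau}$. Conditioning on all weights and positions, the number $Y$ of neighbours of $v$ among these $Z$ vertices stochastically dominates $\mathrm{Bin}(Z,p_0)$; two Chernoff bounds (first to get $Z\ge\mu/2$, then on $\mathrm{Bin}(Z,p_0)$) give $\mathbb{P}[Y<p_0\mu/4]\le e^{-\Omega(\mu)}=n^{-\omega(1)}$. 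A union bound over the $d\cdot n$ pairs $(i,v)$ then yields the local estimate whp, for a suitable constant $c>0$.

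The main obstacle is this quantitative balance: the per-pair failure probability $e^{-\Omega(\mu)}$ with $\mu=\Theta(\log^{\gamma(3-\tau)}n)$ must beat a union bound over $\Theta(n)$ pairs, which works if and only if $\mu=\omega(\log n)$, i.e.\ $\gamma(3-\tau)>1$; this is exactly where the hypothesis $\gamma>\tfrac1{3-\tau}$ is used, and the argument genuinely fails at the threshold (where isolated vertices appear, as discussed after Theorem~\ref{thm:main-thm}). The remaining ingredients — the uniformity of $p_0$ given the hidden constant in~\eqref{eq:connection}, the identity $\mathbb{P}[w\in[c_1'\log^\gamma n,c_2'\log^\gamma n]]=\Theta(\log^{\gamma(1-\tau)}n)$ from the power-law density, and the disjointness bookkeeping in the gluing step — are routine.
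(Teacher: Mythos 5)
Your proposal is correct and follows essentially the same route as the paper's proof: a uniform $\Omega(1)$ connection-probability bound within a strip, a Chernoff-plus-union-bound argument showing whp every vertex of $\CV'$ has $\Omega(\log^{\gamma(3-\tau)}n)$ suitably weighted neighbours in its own $i$-strip, and then summing over one representative per intersected strip using disjointness of the strips. Your two-stage Chernoff (first on the number of candidate vertices, then on the binomial edge count) and the explicit disjointness bookkeeping are just slightly more detailed versions of the same steps.
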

As outlined above, we prove the proposition by showing that each vertex in $\CV'$ has in expectation $\Omega(\log^{\gamma (3-\tau)}n)$ neighbours in the same strip. Since the expectation is $\omega(\log n)$, the actual number of neighbours is concentrated, and we can afford a union bound to show that whp \emph{every} vertex has $\Omega(\log^{\gamma (3-\tau)}n)$ neighbours in the same strip. 
We refer the reader to Figure~\ref{fig:sample_mcd_girg}, which displays how the edges are in approximate alignment with the strips. 

Taken together, these statements imply the theorem for $s=\omega(1)$. Constant-sized sets require a separate but analogous argument.

\begin{credits}
\subsubsection{\ackname} Marc Kaufmann and Ulysse Schaller were supported by the Swiss National Science Foundation [grant number 200021\_192079].
%
\end{credits}

%
%

\bibliographystyle{splncs04}
\bibliography{bibliography}

\appendix
\section{Appendix}\label{sec:appendix}
\subsection{Tools}\label{sec:tools_appendix}


In this short section, we provide two technical tools that will come in handy during the proof, starting with the classical Chernoff bound.
\begin{lemma}[Chernoff bound]
\label{lem:Chernoff}
    Let $X = \sum_{i= 1}^{n}{X_i}$ be the sum of independent indicator random variables $X_i$. Then for any $\eps \in (0,1)$ we have
    \begin{align*}
        \pr \bre{X \ge (1+\eps) \Expected{X}} &\leq \exp\br{-\eps^2 \Expected{X}/3},\\
        \pr \bre{X \leq (1-\eps) \Expected{X}} &\leq \exp\br{-\eps^2 \Expected{X}/2}.
    \end{align*}
    
\end{lemma}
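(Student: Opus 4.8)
The plan is to use the standard exponential-moment (Chernoff--Cramér) method. Write $\mu \coloneqq \Expected{X}$ and $p_i \coloneqq \pr\bre{X_i = 1}$, so that $\mu = \sum_{i=1}^n p_i$; note the $X_i$ need not be identically distributed. For the upper tail, the first step is to apply Markov's inequality to the nonnegative random variable $\myexp{tX}$ for a free parameter $t>0$: since $x\mapsto \myexp{tx}$ is increasing, $\pr\bre{X \ge (1+\eps)\mu} = \pr\bre{\myexp{tX} \ge \myexp{t(1+\eps)\mu}} \le \myexp{-t(1+\eps)\mu}\,\Expected{\myexp{tX}}$. The point of passing to the exponential is that it turns the sum $X = \sum_i X_i$ into a product: by independence of the $X_i$, $\Expected{\myexp{tX}} = \prod_{i=1}^n \Expected{\myexp{tX_i}}$.

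Next I would bound each factor. For an indicator, $\Expected{\myexp{tX_i}} = 1 + p_i\br{\myexp{t}-1}$, and the elementary inequality $1+x \le \myexp{x}$ applied with $x = p_i\br{\myexp{t}-1}$ gives $\Expected{\myexp{tX_i}} \le \myexp{p_i(\myexp{t}-1)}$. Multiplying over $i$ yields $\Expected{\myexp{tX}} \le \myexp{(\myexp{t}-1)\mu}$, hence $\pr\bre{X \ge (1+\eps)\mu} \le \myexp{(\myexp{t}-1 - t(1+\eps))\mu}$. This holds for every $t>0$, so I would minimise the exponent over $t$; differentiating $\myexp{t}-1-t(1+\eps)$ shows the optimum is $t = \log(1+\eps)>0$, which produces the clean bound $\pr\bre{X \ge (1+\eps)\mu} \le \br{\tfrac{\myexp{\eps}}{(1+\eps)^{1+\eps}}}^{\mu}$.

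It then remains to compare this with the claimed bound, i.e.\ to verify the single-variable inequality $\eps - (1+\eps)\log(1+\eps) \le -\eps^2/3$ for $\eps \in (0,1)$. I would prove this by calculus: letting $h(\eps)$ denote the left side plus $\eps^2/3$, one checks $h(0)=0$ and $h'(\eps) = -\log(1+\eps) + \tfrac23\eps$ with $h'(0)=0$; its derivative $-\tfrac{1}{1+\eps}+\tfrac23$ changes sign only once (at $\eps=1/2$), and since $h'(1) = \tfrac23 - \log 2 < 0$ one concludes $h'<0$ on all of $(0,1]$. Thus $h$ is decreasing and $h \le 0$ throughout, which after raising to the power $\mu$ gives exactly the stated upper-tail inequality.

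The lower tail is entirely analogous, using $t>0$ but applied to $\myexp{-tX}$: Markov gives $\pr\bre{X \le (1-\eps)\mu} \le \myexp{t(1-\eps)\mu}\Expected{\myexp{-tX}} \le \myexp{(\myexp{-t}-1 + t(1-\eps))\mu}$, minimised at $t = -\log(1-\eps)>0$ to yield $\br{\tfrac{\myexp{-\eps}}{(1-\eps)^{1-\eps}}}^{\mu}$. Here the required inequality $-\eps - (1-\eps)\log(1-\eps) \le -\eps^2/2$ is in fact immediate, since $(1-\eps)\log(1-\eps) = -\eps + \sum_{k\ge2}\tfrac{\eps^k}{k(k-1)}$ with every term in the sum nonnegative, so subtracting $\eps$ leaves $-\sum_{k\ge2}\tfrac{\eps^k}{k(k-1)} \le -\eps^2/2$. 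I expect no genuine obstacle in this proof — the only nonroutine ingredient is the single-variable estimate for the upper tail, where the constant $\tfrac13$ (rather than the $\tfrac12$ of the lower tail) is precisely what the calculus argument forces, which is the reason the two tails carry different constants.
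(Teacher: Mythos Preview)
Your proof is correct and follows the standard Chernoff--Cram\'er exponential-moment argument; both the optimisation steps and the single-variable calculus estimates check out. The paper itself does not prove this lemma at all --- it is stated as a classical tool without proof --- so there is nothing to compare in terms of approach.
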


We conclude the section with the following useful observation which allows us to bound logarithms of binomial coefficients.
\begin{lemma}[Stirling]
\label{lem:stirling}
   Let $R(N) \coloneqq \tfrac{1}{2} \log\br{2\pi N} $. Then we 
   can express the logarithm of the binomial coefficient $\binom{N}{k}$ as
    \begin{align*}
        \log{\binom{N}{k}} \le \br{N-k}\log{\br{\frac{N}{N-k}}} + k\log{\br{\frac{N}{k}}} + R(N) - R(N-k) - R(k).
    \end{align*}
\end{lemma}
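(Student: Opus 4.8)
\textbf{Proof proposal for Lemma~\ref{lem:stirling}.}

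The plan is to start from the standard Stirling bounds in the form of an explicit two-sided estimate for $\log N!$, and then cancel the error terms carefully. Recall that for every integer $N\ge 1$ one has
\begin{align*}
    \log N! = N\log N - N + \tfrac12\log(2\pi N) + \theta_N,
\end{align*}
where $0 \le \theta_N \le \tfrac{1}{12N}$; in particular $\log N! = N\log N - N + R(N) + \theta_N$ with $R(N) = \tfrac12\log(2\pi N)$ as in the statement, and $\theta_N \ge 0$. First I would write $\log\binom{N}{k} = \log N! - \log k! - \log (N-k)!$ and substitute this expansion for each of the three factorials (using $N = k + (N-k)$ so that the linear terms $-N$, $+k$, $+(N-k)$ telescope to zero). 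This yields
\begin{align*}
    \log\binom{N}{k} = N\log N - k\log k - (N-k)\log(N-k) + R(N) - R(k) - R(N-k) + \theta_N - \theta_k - \theta_{N-k}.
\end{align*}

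Next I would regroup the $N\log N$ term: write $N\log N = k\log N + (N-k)\log N$, so that
\begin{align*}
    N\log N - k\log k - (N-k)\log(N-k) = k\log\!\Big(\tfrac{N}{k}\Big) + (N-k)\log\!\Big(\tfrac{N}{N-k}\Big),
\end{align*}
which is exactly the main term appearing in the claimed inequality. The only remaining task is to discard the leftover error $\theta_N - \theta_k - \theta_{N-k}$ in the correct direction. Since we want an \emph{upper} bound on $\log\binom{N}{k}$, and since $\theta_k, \theta_{N-k} \ge 0$, we have $\theta_N - \theta_k - \theta_{N-k} \le \theta_N$; and $\theta_N \le \tfrac{1}{12N}$, so a fortiori the whole error is bounded above by a quantity that is $\le 0$ once we note we can actually use the sharper fact $\theta_N \le \tfrac{1}{12}$ is not even needed — more cleanly, one can simply invoke the elementary inequality $k!\,(N-k)! \ge$ (the Stirling \emph{lower} bound for $k!$ and $(N-k)!$) $\cdot$ (the Stirling \emph{upper} bound for $N!$) to get $\binom{N}{k} \le \tfrac{N^N}{k^k (N-k)^{N-k}} \cdot \tfrac{\sqrt{2\pi N}}{\sqrt{2\pi k}\sqrt{2\pi (N-k)}}$ directly, which is precisely the exponential of the claimed right-hand side. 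Taking logarithms finishes the proof.

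The only mildly delicate point — and thus the "main obstacle", such as it is — is bookkeeping the direction of the $\tfrac{1}{12N}$-type correction terms so that they genuinely help rather than hurt the inequality, i.e.\ making sure one pairs the Stirling upper bound with the numerator $N!$ and the lower bounds with the denominators $k!$ and $(N-k)!$. Edge cases $k=0$ or $k=N$ are trivial (the binomial coefficient is $1$ and the claimed bound is $R(N) - R(N) - R(0)$, which one handles by the convention that the corresponding term is omitted, or simply by noting the statement is intended for $1 \le k \le N-1$). No deep idea is required; this is a routine consequence of explicit Stirling estimates.
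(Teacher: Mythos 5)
Your proposal is correct and follows essentially the same route as the paper, which simply substitutes Stirling's expansion $\log N! = N\log N - N + R(N) + \theta_N$ into $\log\binom{N}{k}=\log N!-\log k!-\log (N-k)!$ and regroups. One small bookkeeping point at the step you flag as delicate: pairing the plain lower bound $n!\ge\sqrt{2\pi n}\br{n/e}^n$ for the denominators with the upper bound for $N!$ leaves an additive slack $\theta_N\le\tfrac{1}{12N}$, so to get the inequality exactly as stated you should use Robbins' form $n!\ge\sqrt{2\pi n}\br{n/e}^n e^{1/(12n+1)}$ (since $\tfrac{1}{12N}\le\tfrac{1}{12k+1}+\tfrac{1}{12(N-k)+1}$ for $1\le k\le N-1$) — though for the application in the paper, where the $R$-terms are only tracked up to $O(1)$, this slack would be immaterial anyway.
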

\begin{proof}
 The statement follows immediately from Stirling's formula, which yields that $\log{\br{N!}} = N \log N - N +R(N)$.
\end{proof}

\begin{lemma}\label{lem:degree-concentration}
    Let $\CG=(\CV,\CE)$ be an MCD-GIRG on $n$ vertices in dimension $d\ge 1$ with power-law exponent $\tau\in(2,3)$ and $\alpha>1$, let $\gamma > \frac{1}{3-\tau}$ and consider the subgraph $\CG'=(\CV',\CE')$ induced by vertices of weight at least $c\log^{\gamma} n$ for some constant $c>0$. Then whp 
    \begin{align*}
    |V'| = \Theta(n\log^{\gamma(1-\tau)} n) = n^{1-o(1)}.
    \end{align*}
    Moreover, there is a function $h=o(1)$ such that whp all $v\in \CV'$ satisfy
    \begin{align}\label{eq:degree-concentration}
    \begin{split}
        \deg_{\CG}(v) & \in (1\pm h)\EE{\deg_{\CG}(v) \mid w_v} \text{ and} \\
        \deg_{\CG'}(v) & \in (1\pm h)\EE{\deg_{\CG'}(v) \mid w_v},
    \end{split}
    \end{align}
    where we use the notation $1\pm h$ as a shortcut for the interval $[1-h, 1+h]$. Finally, there is $c>0$ such that whp all vertices $v\in \CV$ of weight $w_v \le c \log^{\gamma}n$ have degree $\deg_{\CG}(v) < \log^{\gamma} n$.
\end{lemma}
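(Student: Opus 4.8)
\textbf{Proof plan for Lemma~\ref{lem:degree-concentration}.}
The plan is to split the statement into three parts and handle each by a standard moment/concentration argument. First I would establish the size of $\CV'$. Each vertex lies in $\CV'$ independently with probability $p \coloneqq \Pr{w_v \ge c\log^\gamma n}$, and since $\mathcal{D}$ is a power-law with exponent $\tau$, we have $p = \Theta((\log^\gamma n)^{1-\tau}) = \Theta(\log^{\gamma(1-\tau)}n)$. Hence $\EE{|\CV'|} = \Theta(n\log^{\gamma(1-\tau)}n) = n^{1-o(1)}$, which is $\omega(\log n)$, so the Chernoff bound (Lemma~\ref{lem:Chernoff}) with a constant $\eps$ gives $|\CV'| = \Theta(n\log^{\gamma(1-\tau)}n)$ whp (the failure probability is $\exp(-\Omega(n^{1-o(1)})) = o(1)$, and is even small enough to survive the union bounds needed later).

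Next I would prove the degree concentration~\eqref{eq:degree-concentration}. Fix a vertex $v$ and condition on its weight $w_v$; then $\deg_{\CG}(v)$ is a sum of independent indicators over the other $n-1$ vertices (the connection probabilities in~\eqref{eq:connection} are independent across pairs once weights and positions are fixed, and here we may also average over the other vertices' weights and positions, keeping independence). By Lemma~4.3 of~\cite{bringmann2024average}, $\EE{\deg_{\CG}(v)\mid w_v} = \Theta(w_v) = \Omega(\log^\gamma n)$, and for $v\in\CV'$, using~\cite[Lemma 10]{koch2021bootstrap} as quoted in the main text, $\EE{\deg_{\CG'}(v)\mid w_v} = \Theta(w_v\log^{\gamma(2-\tau)}n) = \Omega(\log^{\gamma(3-\tau)}n) = \omega(\log n)$ since $\gamma > \tfrac{1}{3-\tau}$. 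In both cases the conditional expectation is $\omega(\log n)$, so applying the two-sided Chernoff bound with a slowly vanishing $\eps = h$ (for instance $h = (\log n)^{-\delta}$ for a small $\delta>0$, chosen so that $h^2\cdot\omega(\log n) = \omega(\log n)$) yields that $v$ violates~\eqref{eq:degree-concentration} with probability at most $\exp(-\omega(\log n)) = n^{-\omega(1)}$. A union bound over all $n$ vertices then shows that whp~\eqref{eq:degree-concentration} holds for every $v\in\CV'$ simultaneously; one has to be slightly careful that the bound holds uniformly, but since $\EE{\deg(v)\mid w_v}$ is monotone in $w_v$ and every $v\in\CV'$ has $w_v\ge c\log^\gamma n$, the worst case is the minimal weight and the estimate is uniform.

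For the final claim, I would argue the complementary direction for low-weight vertices. If $w_v \le c\log^\gamma n$ for a suitably small constant $c>0$, then $\EE{\deg_{\CG}(v)\mid w_v} = \Theta(w_v) \le C' c\log^\gamma n$ for an absolute constant $C'$; choosing $c$ small enough that $C'c < \tfrac12$, the expected degree is at most $\tfrac12\log^\gamma n$, which is still $\omega(\log n)$. Then the upper Chernoff bound with constant $\eps$ gives $\Pr{\deg_{\CG}(v) \ge \log^\gamma n} \le \exp(-\Omega(\log^\gamma n)) = n^{-\omega(1)}$, and a union bound over the (at most $n$) such vertices finishes the proof.

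The only mildly delicate point — and the one I would be most careful about — is the uniformity of the union bounds: we need the per-vertex failure probability to beat the factor $n$, which is why it is essential that the relevant conditional expectations are not merely $\Theta(\log^\gamma n)$ but genuinely $\omega(\log n)$ (this is exactly where the hypothesis $\gamma > \tfrac{1}{3-\tau}$ enters for the $\CG'$-degree, and where choosing $c$ small — but constant — suffices for the low-weight claim). Everything else is a routine application of the Chernoff bound together with the two cited expectation estimates.
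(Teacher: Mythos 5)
Your treatment of the first two claims matches the paper's proof essentially step for step: conditionally on $w_v$ the degree (in $\CG$ or in $\CG'$) is a binomially distributed sum of independent indicators, its conditional mean is $\Theta(w_v)$ in $\CG$ and $\Theta(w_v\log^{\gamma(2-\tau)}n)$ in $\CG'$ by the cited lemmas, both are $\omega(\log n)$ because $\gamma>\tfrac{1}{3-\tau}>1$, and a Chernoff bound with a slowly vanishing relative error $h$ plus a union bound over the at most $n$ vertices finishes the argument; your explicit choice $h=(\log n)^{-\delta}$ with $\delta$ a small constant (one needs $2\delta<\gamma(3-\tau)-1$) is fine and is in fact slightly more explicit than the paper about where the function $h=o(1)$ comes from.

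Your argument for the last claim (vertices of weight $w_v\le c\log^\gamma n$ have degree below $\log^\gamma n$) has a genuine flaw, however. You assert that the conditional expected degree ``is at most $\tfrac12\log^\gamma n$, which is still $\omega(\log n)$''; the second part is false, since a vertex of weight $O(1)$ has expected degree $\Theta(w_v)=O(1)$. This is not merely cosmetic: the multiplicative Chernoff bound of Lemma~\ref{lem:Chernoff} with constant $\eps\in(0,1)$ only yields failure probability $\exp\br{-\Omega\br{\EE{\deg_{\CG}(v)\mid w_v}}}$, which for small weights is $\exp(-O(1))$ and cannot survive the union bound, and the event $\{\deg_{\CG}(v)\ge\log^\gamma n\}$ then lies far beyond the threshold $(1+\eps)\EE{\deg_{\CG}(v)\mid w_v}$ that this form of Chernoff addresses. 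The standard repair (implicit in the paper's one-line argument) is either monotonicity in the weight: couple $v$ with a vertex of weight exactly $c\log^\gamma n$, whose conditional mean $\mu$ satisfies $\mu=\Theta(\log^\gamma n)$ and, for $c$ small, $\mu\le\tfrac12\log^\gamma n$, so that $\{\deg\ge\log^\gamma n\}\subseteq\{\deg\ge(1+\eps)\mu\}$ and Lemma~\ref{lem:Chernoff} gives $\exp(-\Omega(\log^\gamma n))=n^{-\omega(1)}$; or invoke a tail bound valid far above the mean, e.g.\ $\Pr{X\ge t}\le\exp\br{-\mu\varphi(t/\mu)}$ with $\varphi(x)=x\log x-x+1$, which gives $\exp(-\Omega(\log^\gamma n))$ uniformly over all means $\mu\le\tfrac12\log^\gamma n$. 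With either fix, your union bound over all vertices goes through exactly as in the paper.
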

\begin{proof}
    Each vertex in $\CV$ has probability $\Theta(\log^{\gamma(1-\tau)} n)$ to have weight at least $c\log^\gamma n$, independently of each other. Therefore, whp $|\CV'| = \Theta(n\log^{\gamma(1-\tau)} n)$. Moreover, for $v\in \CV'$ with given weight $w_v$ the degree of $v$ in either $\CG$ or $\CG'$ follows a binomial distribution by the symmetric definition of the model. The expectation is $\Theta(w_v)$ in $\CG$ and $\Theta(w_v\log^{\gamma(2-\tau)}n)$ in $\CG'$ by~\cite[Lemma 10]{koch2021bootstrap}. Since $w_v \ge c\log^\gamma n$, the latter expectation is $\Omega(\log^{\gamma(3-\tau)}n) = \omega(\log n)$, and the former expectation is also $\omega(\log n)$. Hence,~\eqref{eq:degree-concentration} holds with probability $1-e^{-\omega(\log n)} = 1-o(1/n)$ for $v$ by the Chernoff bounds. A union bound over all $v\in \CV'$ implies that whp~\eqref{eq:degree-concentration} holds for all $v\in \CV'$.

    For the last statement, if $c>0$ is sufficiently small then the expected degree of any vertex $v$ of weight $w_v \le c \log^{\gamma}n$ is at most $\tfrac12 \log^\gamma n$. Since the degree is binomially distributed, $\Pr{\deg(v) \ge \log^\gamma n \mid w_v} \le e^{-\Omega(\log^\gamma n)} = o(1/n)$ by the Chernoff bound, and the statement follows by a union bound over all $v$.
\end{proof}

\subsection{Proof of the Main Theorem}\label{sec:proof_appendix}
Turning to the formal proof of our main theorem, we begin by making precise the concepts of strips.
\begin{definition}
\label{def:strip}
    Consider the
    torus $\T^d$ and let $\ell\in(0,1)$ be such that $\ell^{-1}\in\N$. For $1\le i \le d$ we define an \emph{$i$-strip of width $\ell$} to be a subset of $\T^d$ of the form $\{(x_1, \ldots, x_d)\in\T^d : x_i \in [j\cdot \ell, (j+1)\cdot \ell)\}$ where $0 \le j < \ell^{-1}$ is an integer. 
\end{definition}
Throughout the proof (and the paper as a whole), we will partition the $i$-th dimension into strips of width $\ell\coloneqq \lfloor\tfrac{n}{\log^{2\gamma}n}\rfloor^{-1}$, and simply refer to them as \emph{strips} or \emph{$i$-strips} as their width is clear from context. For a vertex $v\in \CV$ with position $x_v$, we denote by $I_{i,v}$ the (unique) $i$-strip that contains $v$ (i.e.\ that satisfies $(x_v)_i\in I_{i,v}$). The idea behind this strip width is that two vertices of weight $\log^{\gamma}n$ in the same strip have a connection probability in $\Theta(1)$. This will be shown in Proposition \ref{prop:inclusive_nbhd}.

The first proposition below is at the heart of the proof. It gives a lower bound on the number of strips which a set of at least $s$ many vertices will intersect whp. 
\stripcovering*

\begin{proof}
We begin the proof with a remark that justifies some simplifying assumptions.
\begin{remark}\label{rem:bounds-on-k}
    Clearly we can assume that $k \le s$. Moreover, by Chernoff bounds and Definitions~\ref{def:power-law} and~\ref{def:mcd_girgs}, we have $|\CV'|=\Theta(n \log^{\gamma(1-\tau)}n)$ whp. Indeed, the expected number of vertices of at least some weight $w$ is of order $\Theta(n \cdot w^{1-\tau})$, and so is the expected number of vertices of weight in the interval $[c_1w,c_2w]$ for any constants $c_2>c_1>0$. Since they are sampled independently and their expectation is clearly $\omega(1)$, the result also holds whp by Chernoff's inequality. Since $s \le |\CV'|$ we have
    \begin{align*}
        s \cdot \br{\frac{1}{\log^{\gamma(3-\tau)}n} \cdot \br{\frac{|\CV'|}{s}}^{1-1/c_d}} 
        = \frac{s^{1/{c_d}}|\CV'|^{1-1/{c_d}}}{\log^{\gamma(3-\tau)}n}
        \le \frac{|\CV'|}{\log^{\gamma(3-\tau)}n} = \Theta\br{\frac{n}{\ltg}},
    \end{align*}
    and hence we can also assume that $k \le \lfloor n/\ltg \rfloor$.
\end{remark}

    Now, we actually prove the following. Given $s\in\N$, we show that whp an adversary cannot find a set $S$ of size $\aS = s$ along with $k$ $i$-strips in each dimension $1\le i \le d$ such that $S$ is contained in the union of these $k$ $i$-strips for all $i$. 
    Notice that this implies the desired result.


    Let $p_{k,s}$ denote the probability that there exists a set $S\subset \CV'$ of size $s$ as well as $k$ $i$-strips for each dimension $1\le i \le d$ that satisfy the above property. Note that the number of choices for $S$ is $\binom{|\CV'|}{s}$ and the number of choices for the strips is $\binom{\lfloor n/\ltg\rfloor}{k}^d$. For a fixed subset $S\subset\CV'$ of size $s$ as well as $k$ fixed $i$-strips, the probability that $S$ is contained in the union of these $i$-strips is $(k\lfloor n/\ltg\rfloor^{-1})^s$ since the volume of the union of the $i$-strips is $k\lfloor n/\ltg\rfloor^{-1}$. Since the coordinates of the vertices are independently sampled, a union bound yields the upper bound
%
%
    \begin{align*}
        p_{k,s} \le \binom{|\CV'|}{s} \cdot \binom{\lfloor n/\ltg\rfloor}{k}^d \cdot \left(\frac{k}{\lfloor n/\ltg\rfloor}\right)^{ds}.
    \end{align*}
    Using Lemma \ref{lem:stirling}, and remembering that $R(N) = \tfrac{1}{2} \log\br{2\pi N} + o(1)$, we compute
    \begin{align*}
        &p_{k,s} \le \exp \bre{\log{\binom{|\CV'|}{s}} + d\log {\binom{\lfloor n/\ltg\rfloor}{k}} + ds \log{\br{\frac{k}{\lfloor n/\ltg\rfloor}}}}\\
        &= \exp \Biggl[\underbrace{\br{s - |\CV'|} \log{\br{1 - \frac{s}{|\CV'|}}}}_{\eqqcolon B_1} 
        + s \log{\br{\frac{|\CV'|}{s}}} + \underbrace{R(|\CV'|) - R(|\CV'|-s) - R(s)}_{\eqqcolon R_1} \\
        &\quad\quad\underbrace{-d\br{\lfloor n/\ltg\rfloor - k} \log{\br{1 - \frac{k}{\lfloor n/\ltg\rfloor}}}}_{\eqqcolon B_2} 
        + dk \log{\br{\frac{\lfloor n/\ltg\rfloor}{k}}} \\
        &\quad\quad + \underbrace{d (R(\lfloor n/\ltg\rfloor)-R(\lfloor n/\ltg\rfloor-k) -R(k))}_{\eqqcolon R_2} + ds \log{\br{\frac{k}{\lfloor n/\ltg\rfloor}}}  \Biggr]\\
        &= \exp \Biggl[-s\br{d\br{1 - \frac{k}{s}} \log{\br{\frac{\lfloor n/\ltg\rfloor}{k}}} - \log\br{\frac{|\CV'|}{s}}} + B_1 + B_2 + R_1 + R_2 \Biggr],
    \end{align*}
    where for the last inequality we used the fact that $|\CV'| \le n$ and $\lfloor n/\ltg\rfloor \le n$. 
    
    We start by examining $B_1$. If $s = |\CV'|$, then $B_1 = 0$. For the case $s < |\CV'|$, we make use of the fact that $\log\br{1+x} \ge x/(1+x)$ for $x>-1$ to get:
    \begin{align*}
        B_1 
        = s\br{1 -\frac{|\CV'|}{s}} \cdot \log{\br{1- \frac{s}{|\CV'|}}} 
        \leq s\br{1 -\frac{|\CV'|}{s}} \cdot \frac{-s}{|\CV'|-s}
        = s.
    \end{align*}
    
    Moving on to $B_2$, recall that $k \le s$ and $k \le \lfloor n / \ltg \rfloor$ by Remark \ref{rem:bounds-on-k}.
    Therefore, using again the inequality $\log\br{1+x} \ge x/(1+x)$, we have
    \begin{align*}
        B_2 
        = - d\br{\lfloor n / \ltg \rfloor-k} \log{\br{1 - \frac{k}{\lfloor n / \ltg \rfloor}}} 
        \leq dk \le ds.
    \end{align*}
    

    Using the identity $\log a - \log b = \log \tfrac{a}{b}$, by plugging in the terms and elementary algebraic manipulations, we easily get that $R_1 + R_2 = O(1)$. Therefore, $B_1 + B_2 + R_1 + R_2 \le 2ds$ for sufficiently large $s$, and we deduce that
    \begin{align*}
        p_{k,s} &\le \exp \Biggl[-s\br{d\br{1 - \frac{k}{s}} \log\br{\frac{\lfloor n/\ltg\rfloor}{k}} - \log\br{\frac{|\CV'|}{s}} - 2d}\Biggr]\\
        &= \exp \bre{-s\br{d\br{1 - \frac{k}{s}} \log\br{\frac{s \lfloor n/\ltg\rfloor}{k |\CV'|} \br{\frac{|\CV'|}{s}}^{1-1/(d(1-\frac{k}{s}))}}-2d}}.
    \end{align*}
    Note that, using Chernoff's bound and Definitions~\ref{def:power-law} and~\ref{def:mcd_girgs}, whp we have $|\CV'|= \Theta( n \log^{\gamma(1-\tau)}n)$, and in particular there exists a constant $c>0$ such that $\frac{\lfloor n/\ltg\rfloor}{|\CV'|} \ge \frac{1}{c\log^{\gamma(3-\tau)}n}$. We condition on this for the rest of the proof, and obtain
    \begin{align*}
        p_{k,s} \le \exp \bre{-s\br{d\br{1 - \frac{k}{s}} \log\br{\frac{s}{k c \log^{\gamma(3-\tau)}n} \br{\frac{|\CV'|}{s}}^{1-1/(d(1-\frac{k}{s}))}}-2d}}.
    \end{align*}
    
    It remains to show that for some $k$ satisfying \eqref{eq:k-condition}, we have $p_{k,s}=o(1)$. Let
    \begin{align*}
        k \coloneqq s \cdot \min\left\{c_1, \frac{c_2}{\log^{\gamma(3-\tau)}n} \cdot \br{\frac{|\CV'|}{s}}^{1-1/c_d}\right\},
    \end{align*}
    for some positive constants $c_1, c_2 < 1$ and $c_d > 1$ to be specified later. Clearly this satisfies \eqref{eq:k-condition}. We split into two cases depending on where the minimum is taken in the above expression. 
    
    For the first case, assume that the first term is taken as the minimum, i.e.\ that
    \begin{align}\label{eq:min-first-term}
        \frac{k}{s}=c_1 \le \frac{c_2}{\log^{\gamma(3-\tau)}n} \cdot \br{\frac{|\CV'|}{s}}^{1-1/c_d}.
    \end{align}
    We get
    \begin{align*}
        p_{k,s} &\le \exp \bre{-s\br{d\br{1 - c_1} \log\br{\frac{1}{c_1 c \log^{\gamma(3-\tau)}n} \br{\frac{|\CV'|}{s}}^{1-1/(d(1-c_1))}}-2d}}.
    \end{align*}
    To establish that the above expression is $o(1)$, we choose $c_1<1$ and $c_d > 1$ small enough so that $c_d < d(1-c_1)$, which is possible since $d \ge 2$. Using \eqref{eq:min-first-term} we obtain
    \begin{align*}
        p_{k,s} &\leq \exp \bre{-s\br{d\br{1 - c_1} \log\br{\frac{1}{c_2 c} \cdot \br{\frac{|\CV'|}{s}}^{1/c_d-1/(d(1-c_1))} }- 2d}},
    \end{align*}
    which is $o(1)$ if we choose $c_2>0$ so small that $(1-c_1)\log((c_2c)^{-1}) > 2$ since $s = \omega(1)$.
    
    If, considering now the second case, the minimum is attained by the second term, i.e.\ if
    \begin{align}\label{eq:min-second-term}
        \frac{k}{s} = \frac{c_2}{\log^{\gamma(3-\tau)}n} \cdot \br{\frac{|\CV'|}{s}}^{1-1/c_d} \le c_1,
    \end{align}
    we get
    \begin{align*}
        p_{k,s} &\le \exp \bre{-s\br{d\br{1 - \frac{k}{s}} \log\br{\frac{1}{c_2 c} \cdot \br{\frac{|\CV'|}{s}}^{1/c_d-1/(d(1-\frac{k}{s}))}}-2d}}\\
        &\leq \exp \bre{-s\br{d\br{1 - c_1} \log\br{\frac{1}{c_2 c} \cdot \br{\frac{|\CV'|}{s}}^{1/c_d-1/(d(1-c_1))}}-2d}},
    \end{align*}
    where we used \eqref{eq:min-second-term} for the second inequality. Using the same reasoning as above shows that the values for the constants chosen above also guarantees that the expression becomes $o(1)$.
    
$\hfill\qed$
\end{proof}

The next proposition gives a lower bound on the number of neighbours of $S$ with with weight $\Theta(\log^\gamma n)$, expressed as a function of the number of strips it intersects along some dimension.

\inclusivenbhd*
\begin{proof}
    To circumvent over-counting vertices in the neighbourhood of the set $S$, we consider at most a single vertex in $S$ for each $i$-strip and only look at its neighbourhood restricted to this $i$-strip. By showing that for a single such vertex its strip-restricted neighbourhood contains $\Omega(\log^{\gamma(3-\tau)}n)$ vertices of the right weight, the proposition then follows by summing up the neighbourhoods for the $k$ strips. We highlight that the fact that  each vertex in $\mathcal{V}'$ has enough neighbours in its own $i$-strip, for each coordinate $i$ is a property of the whole graph. This global property \emph{then} implies the property that each subset $S\subset\mathcal{V}'$ has a large enough neighbourhood. This is why we will not need to take a union bound over all subsets $S\subset\mathcal{V}'$ - but only over all vertices in $\mathcal{V}'$, which yields the global property. 

    Let $v \in S$ be an arbitrary vertex, remember that $I_{i,v}$ denotes the $i$-strip containing $v$, and consider another, arbitrary vertex $u\in  I_{i,v}\setminus \{v\}$ of weight $w_u \in I_W := [c_1'\log^\gamma n, c_2'\log^\gamma n]$ in this strip. Using Equation~\eqref{eq:connection}, we can lower bound the probability that the edge $uv$ exists (in $\CG'$) by
    \begin{align*}
        \pr[uv\in\CE & \mid x_{u,i} \in I_{i,v},w_u\in I_W,w_v\ge c'\log^\gamma n] \\
        & \ge \Omega\left(\min\left\{1, \left(\frac{\log^{2\gamma}n}{n \cdot \lfloor n/\ltg \rfloor^{-1}}\right)^{\alpha} \right\}\right)
        = \Omega(1),
    \end{align*}
    where we used that the width of a strip is $\lfloor n/\ltg \rfloor^{-1}$ and that the volume function scales as $V_{min}(r) = \Theta(r)$ in MCD-GIRGs. By Definitions~\ref{def:power-law} and~\ref{def:mcd_girgs}, the expected number of vertices $u$ in $\CV' \cap I_{i,v}$ with weight in $I_W$ is $\Theta(\log^{\gamma(3-\tau)}n)$, and hence the expected number of neighbours of $v$ with those properties is also $\Theta(\log^{\gamma(3-\tau)}n)$. By the Chernoff bound we deduce that $v$ has $\Theta(\log^{\gamma(3-\tau)}n)$ in $I_{i,v}$ with weight in $I_W$ with probability $1-\exp(-\Omega(\log^{\gamma(3-\tau)}n))=1-n^{-\omega(1)}$ (since $\gamma>\tfrac{1}{3-\tau}$). 
    The proposition follows by taking a union bound over all vertices in $\CV'$.
$\hfill\qed$
\end{proof}

We now have the tools to prove the main result. Before doing so, we begin with the next, easier result, which shows that vertex sets of constant size are $\log^{\gamma(3-\tau)} n$-expanding sets, with a by now very straightforward argument.

\begin{proposition}\label{prop:constant-size-expansion}
    Let the setup be as in Theorem \ref{thm:main-thm}. With high probability all vertex sets $S \subset \CV'$ satisfying $\abs{S} = O(1)$ have $\Omega(\log^{\gamma(3-\tau)}n)$ neighbours of weight in $[c_1'\log^\gamma n, c_2'\log^\gamma n]$.
\end{proposition}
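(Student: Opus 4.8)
The plan is to reduce Proposition~\ref{prop:constant-size-expansion} to the already-established Proposition~\ref{prop:inclusive_nbhd} by showing that any set $S\subset \CV'$ of constant size intersects at least one $i$-strip along some coordinate, and then to observe that the external neighbourhood is obtained from the inclusive neighbourhood by subtracting only a constant. Concretely, since $|S| = O(1)$, taking $k=1$ in Proposition~\ref{prop:inclusive_nbhd} is trivially valid: every nonempty $S$ has nonempty intersection with at least one $i$-strip (for instance the $1$-strip containing any fixed vertex of $S$). Hence whp every such $S$ has at least $\Omega(\log^{\gamma(3-\tau)}n)$ neighbours of weight in $[c_1'\log^\gamma n, c_2'\log^\gamma n]$, which is exactly the claim. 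Thus the proposition is essentially an immediate corollary of Proposition~\ref{prop:inclusive_nbhd} with $k=1$.

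The only subtlety is uniformity: Proposition~\ref{prop:inclusive_nbhd} is stated for a \emph{fixed} $k$ and holds whp; I want it simultaneously for \emph{all} constant-sized $S$. But this is not actually an issue, because as the proof of Proposition~\ref{prop:inclusive_nbhd} makes clear, the high-probability event there is the \emph{global} statement that \emph{every} vertex $v\in\CV'$ has $\Omega(\log^{\gamma(3-\tau)}n)$ neighbours of the right weight inside its own strip $I_{1,v}$; no union bound over subsets $S$ is needed. Conditioning on that single global event (which holds whp, in fact with probability $1-n^{-\omega(1)}$), every nonempty $S\subset\CV'$ automatically inherits $\Omega(\log^{\gamma(3-\tau)}n)$ neighbours of weight in $[c_1'\log^\gamma n,c_2'\log^\gamma n]$ by looking at any single vertex of $S$ and its strip-restricted neighbourhood.

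\textbf{Main obstacle.} There is essentially no obstacle here; the statement was deliberately set up so that the hard probabilistic work is already packaged in Proposition~\ref{prop:inclusive_nbhd}. If anything needs a word of care, it is only to note that the proposition as phrased produces a lower bound on the \emph{inclusive} neighbourhood (neighbours of $S$, possibly including vertices of $S$), so to deduce an \emph{expanding set} bound one subtracts $|S| = O(1)$, which is negligible against $\Omega(\log^{\gamma(3-\tau)}n)$. This also sets up cleanly for the full proof of Theorem~\ref{thm:main-thm}: for $|S| = \omega(1)$ one instead combines Proposition~\ref{prop:strip-covering} (to get a good $k$) with Proposition~\ref{prop:inclusive_nbhd} (to convert $k$ strips into $\Omega(k\log^{\gamma(3-\tau)}n)$ neighbours), and subtracting $|S|$ yields the claimed expansion factor~\eqref{eq:expansion-factor}; the constant-size case handled here is the remaining corner case.
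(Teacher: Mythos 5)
Your proposal is correct and matches the paper's own proof: the paper likewise invokes the reasoning of Proposition~\ref{prop:inclusive_nbhd}, whose whp event is the global statement that every vertex $v\in\CV'$ has $\Theta(\log^{\gamma(3-\tau)}n)$ same-strip neighbours of weight in $[c_1'\log^\gamma n, c_2'\log^\gamma n]$, which immediately yields the claim for constant-sized $S$ (your ``$k=1$'' reduction is just this observation). Your remark that no union bound over subsets is needed is exactly the point the paper emphasizes, so there is nothing to add.
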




\begin{proof}
Using the same reasoning as in the proof of Proposition \ref{prop:inclusive_nbhd}, we know that whp each vertex $v\in\CV'$ has $\Theta(\log^{\gamma(3-\tau)}n)$ neighbours in $I_{i,v}$ with weights in $[c_1'\log^\gamma n, c_2'\log^\gamma n]$. This implies the statement.
\qed
\end{proof}


Finally, we come to the proof of our main theorem, which we restate here for convenience.

\MainTheorem*

\begin{proof}
    We distinguish between two cases based on the size of $S$. For sets of size at most $C$, where $C$ is a constant to be chosen later, the theorem holds due to Proposition \ref{prop:constant-size-expansion}. Thus, for the remainder of the proof, we may restrict to sets of size at least $C$. We also condition on the events described in Propositions \ref{prop:strip-covering} and \ref{prop:inclusive_nbhd} to hold.

    Let $S\subset \CV'$ be such that $C \le \abs{S} \le \eps\abs{\CV'}$, where $\eps\in(0,1)$ is a constant that we will fix later. Proposition \ref{prop:strip-covering} holds for all $|S|=\omega(1)$, hence it also holds for $|S|\ge C$ if $C$ is large enough. By Proposition \ref{prop:strip-covering}, there is a coordinate $i$ and a natural number $k$ satisfying
    \begin{align*}
        k = \abs{S} \cdot \Omega\br{\min\left\{1, \frac{1}{\log^{\gamma(3-\tau)}n} \cdot \br{\frac{|\CV'|}{\abs{S}}}^{1-1/c_d}\right\}}
    \end{align*}
    such that $S$ has non-empty intersection with at least $k$ disjoint $i$-strips, where $c_d > 1$ is a constant that depends only on $d$. By Proposition \ref{prop:inclusive_nbhd}, we conclude that $\abs{\Gamma_{\CG'}(S)} \ge \Omega(k \cdot \log^{\gamma (3-\tau)}n)$. Therefore we obtain
    \begin{align*}
        |N_{ext}(S)|
        &\ge \Omega(k \cdot \log^{\gamma (3-\tau)}n) - |S|\\
        &= \Omega\br{\abs{S} \cdot \min\left\{\log^{\gamma (3-\tau)}n, \br{\frac{\abs{\CV'}}{|S|}}^{1-1/c_d}\right\}}- \aS \\
        & = \br{ \Omega \br{\min\left\{\log^{\gamma (3-\tau)}n, \br{\frac{\abs{\CV'}}{|S|}}^{1-1/c_d}\right\}}-1} \cdot \aS\\
        &\eqqcolon (\Phi(\aS, n, \abs{\CV'})-1) \cdot \aS.
    \end{align*}
    We can choose a constant $\eps>0$ so small (with respect to the constants hidden by the $\Omega$-notation above) that the assumption $\aS \le \eps \abs{\CV'}$ guarantees that $\Phi(\aS,n, \abs{\CV'}) \ge 2$, and in particular 
    \[
    \Phi(\aS,n, \abs{\CV'})-1
    = \Omega(\Phi(\aS,n, \abs{\CV'}))
    = \Omega\br{\min\left\{\log^{\gamma(3-\tau)}n, \br{\frac{\abs{\CV'}}{\aS}}^{1-1/c_d}\right\}},
    \]
    which concludes the proof for $\CG'$. If we replace $\CG'$ with the graph induced by vertices of weights in $[c_1'\log^\gamma n, c_2'\log^\gamma n]$, the proof remains identical since Propositions~\ref{prop:inclusive_nbhd} and~\ref{prop:constant-size-expansion} still hold.

    Finally, consider the graph $\CH' = (\CV_{\CH'}, \CE_{\CH'})$ consisting of all vertices of $\CG$ of degree (instead of weight) at least $c'\log^\gamma n$. Then by Lemma~\ref{lem:degree-concentration}, whp every such vertex has weight at least $C_1\log^\gamma n$ for some constant $C_1>0$. Conversely by the same lemma, whp every vertex of weight at least $C_2\log^\gamma n$ is in $\CV_{\CH'}$, for a suitable constant $C_2$. Hence, if $\CV_{1}$ is the set of vertices of weight at least $C_1\log^\gamma n$, and $\CV_{2}$ is the set of vertices of weight at least $C_2\log^\gamma n$, then $\CV_1\subseteq \CV_{\CH'} \subseteq \CV_2$. Therefore, it suffices to show that every subset $S\subseteq \CV_2$ has at least $f\cdot |S|$ neighbours in $\CV_1$, where $f = f(|S|)$ is the expansion factor in~\eqref{eq:expansion-factor}. This is implied by Propositions~\ref{prop:inclusive_nbhd} and~\ref{prop:constant-size-expansion}. The proof for vertices of degrees in the interval $[c_1'\log^\gamma n, c_2'\log^\gamma n]$ is analogously, where here we use that degrees of vertices in $\CV'$ are even concentrated up to a factor $(1\pm o(1))$ by Lemma~\ref{lem:degree-concentration}.
    
$\hfill\qed$
\end{proof}

\end{document}